\def\N#1{\left|\!\left|{#1}\right|\!\right|} 
\def\>{\rangle}\def\<{\langle} 
\newcommand{\bea}{\begin{eqnarray}}
  \newcommand{\eea}{\end{eqnarray}}
\newtheorem{theorem}{Theorem}
\newtheorem{definition}[theorem]{Definition}
\newtheorem{lemma}[theorem]{Lemma}
\newenvironment{proof}[1][Proof]{\noindent\textbf{#1} }{\ \rule{0.5em}{0.5em}}
\DeclareMathOperator{\id}{id}
 \DeclareMathOperator{\tr}{Tr}
\newcommand{\ket}[1]{| #1 \rangle}
\newcommand{\proj}[1]{| #1 \rangle\!\langle #1 |}
\newcommand{\be}{{\mathbf e}}
\newcommand{\tA}{{\widetilde{A}}}
\newcommand{\tB}{{\widetilde{B}}}
\newcommand{\tR}{{\widetilde{R}}}
\newcommand{\tE}{{\widetilde{E}}}
\def\cA{{\cal A}}        \def\cB{{\cal B}}
\def\cD{{\cal D}}        \def\cE{{\cal E}}
        \def\cH{{\cal H}}
        \def\cK{{\cal K}}
        \def\cN{{\cal N}}
        \def\cT{{\cal T}}
\def\0{{\mathbf{0}}}
\def\1{{\mathbf{1}}}
\def\2{{\mathbf{2}}}
\def\3{{\mathbf{3}}}
\def\4{{\mathbf{4}}}
\def\5{{\mathbf{5}}}
\def\6{{\mathbf{6}}}
\def\7{{\mathbf{7}}}
\def\8{{\mathbf{8}}}
\def\9{{\mathbf{9}}}
\def\bbI{{\mathbb{I}}}
\def\be{\begin{equation}}
\def\ee{\end{equation}}
\def\bea{\begin{eqnarray}}
\def\eea{\end{eqnarray}}
\def\reff#1{(\ref{#1})}
\def\eps{\varepsilon}
\begin{document}

\title{One-shot entanglement-assisted quantum and classical communication}

\author{Nilanjana Datta and Min-Hsiu Hsieh \thanks{Nilanjana Datta and Min-Hsiu Hsieh are with Statistical Laboratory, University of Cambridge, Wilberforce Road, Cambridge CB3 0WB, UK. (E-mail: n.datta@statslab.cam.ac.uk and minhsiuh@gmail.com)}}
\maketitle

\begin{abstract}
We study entanglement-assisted quantum and classical communication over a single use of a quantum channel, which itself can correspond to a finite number of uses of a channel with arbitrarily correlated noise. We obtain characterizations of the corresponding one-shot capacities by establishing upper and lower bounds on them  in terms of the difference of two smoothed entropic quantities. In the case of a memoryless channel, the upper and lower bounds converge to the known single-letter formulas for the corresponding capacities, in the limit of asymptotically many uses of it. Our results imply that the difference of two smoothed entropic quantities characterizing the one-shot entanglement-assisted capacities serves as a one-shot analogue of the mutual information, since it reduces to the mutual information, between the output of the channel and a system purifying its input, in the asymptotic, memoryless scenario.
\end{abstract}

\section{Introduction}

An important class of problems in quantum information theory concerns the  evaluation of information transmission capacities of a quantum channel. The  first major breakthrough in this area was made by Holevo \cite{Holevo:1977bx, Holevo:1998hy}, and Schumacher and Westmoreland  \cite{Schumacher:1997bz}, who obtained an expression for  the capacity of a quantum channel for transmission of classical information. They proved that this capacity is characterized by the so-called ``Holevo quantity''.  Expressions for various other capacities of a quantum channel were obtained subsequently, the most important of them perhaps being the capacity for transmission of  quantum information. It was established by Lloyd \cite{Lloyd:1997if}, Shor \cite{Shor02}, and Devetak \cite{Devetak:2005ge}. Both of the above capacities require regularization over asymptotically many uses of the channel. The classical capacity formula is given in terms of a regularized Holevo quantity while the quantum capacity formula is given in terms of a  regularized coherent information.

Even though these regularized expressions are elegant, they are not useful because the regularization prevents one from explicitly computing the capacity of any given channel. Moreover, since these capacities are in general not additive, surprising effects like superactivation can occur \cite{Smith:2008jd,Hastings:2009fc}. Hence it is more desirable to obtain expressions for capacities which are
given in terms of {\em{single-letter}} formulas, and therefore have the attractive  feature of being exempt from regularization.

The quantum and classical capacities of a quantum channel can be increased if the sender and receiver share entangled states, which they may use in the communication protocol. From superdense coding we know that the classical capacity of a noiseless quantum channel is exactly doubled in the presence of a prior shared maximally entangled state (an ebit). For a noisy quantum channel too, access to an entanglement resource can lead to an enhancement of both its classical and
quantum capacities. The maximum asymptotic rate of reliable transmission of classical (quantum) information through a quantum channel, in the presence of unlimited prior shared entanglement between the sender and the receiver, is referred to as the entanglement-assisted classical (quantum) capacity of the channel.

Quantum teleportation and superdense coding together imply 
the following simple relation between quantum and 
classical communication through a noiseless qubit channel: if the sender
and receiver initially share an ebit of entanglement, then transmission of 
one qubit is equivalent to transmission of two classical bits\footnote{Without prior shared entanglement, one can only send one classical bit through a 
single use of a noiseless qubit channel. Moreover, a noiseless classical bit channel cannot be used to transmit a qubit.}. This relation carries over to the asymptotic setting under the assumption of unlimited prior shared entanglement between sender and receiver \cite{Devetak:2008eb}. One can then show that the entanglement-assisted quantum capacity of a quantum channel is equal to half of its entanglement-assisted classical capacity. In fact, the entanglement-assisted classical capacity was the first capacity for which a single-letter formula was obtained. This result is attributed to Bennett, Shor, Smolin and Thapliyal \cite{Bennett:2002fw}. Its proof was later simplified by Holevo \cite{Holevo:2002cd}, and an alternative proof was given in \cite{Hsieh:2008dk}. A trade-off formula for the entanglement-assisted quantum capacity region was obtained by Devetak, Harrow, and Winter \cite{Devetak:2004go,Devetak:2008eb}.

The different capacities of a quantum channel were originally evaluated in the so-called {\em{asymptotic, memoryless scenario}}, that is, in the limit of asymptotically many uses of the channel, under the assumption that the channel was memoryless (i.e., there is no correlation in the noise acting on successive 
inputs to the channel). In reality however, the assumption of channels being memoryless, and the consideration of an asymptotic scenario is not necessarily justified. A more fundamental and practical theory of information transmission through quantum channels is obtained instead in the so-called {\em{one-shot scenario}} (see e.g. \cite{Datta:2011vca} and references therein) in which channels are available for a finite number of uses, there is a correlation between their successive actions, and information transmission can only be achieved with finite accuracy. The optimal rate at which information can be transmitted through a single use of a quantum channel (up to a given accuracy) is called its one-shot capacity. Note that a single use of the channel can itself correspond to a finite number of uses of a channel with arbitrarily correlated noise. The one-shot capacity of a classical-quantum channel was studied in \cite{Wang:2010ux,Renes:2010vk,Mosonyi:2009bl}, whereas the the one-shot capacity of a quantum channel for transmission of quantum information was evaluated in \cite{Datta:2011vca,Buscemi:2010dj}.

The fact that the one-shot scenario is more general than the asymptotic, memoryless one is further evident from the fact that the asymptotic capacities of a memoryless channel can directly be obtained from the corresponding one-shot capacities. Moreover, one-shot capacities also yield the asymptotic capacities of channels with memory (see e.g. \cite{Buscemi:2010dj}). In \cite{Datta:2008ek} the asymptotic entanglement-assisted classical capacity of a particular class of quantum channels with long-term memory, given by convex combinations of memoryless channels, was evaluated. The classical capacity of this channel was obtained in \cite{Datta:2007do}. A host of results on asymptotic capacities of channels with memory can be attributed to Bjelakovic et al (see \cite{Bjelakovic:2009gz, Bjelakovic:2008bj} and references therein).

In this paper we study the one-shot entanglement-assisted quantum and classical capacities of a quantum channel. The requirement of finite accuracy is implemented by imposing the constraint that the error in achieving perfect information transmission is at most $\eps$, for a given $\eps > 0$. We completely characterize these capacities by deriving upper and lower bounds  for them in terms of the same smoothed entropic quantities.


Our lower and upper bounds on the one-shot entanglement-assisted quantum and 
classical capacities converge to the known single-letter formulas for the corresponding capacities in the asymptotic, memoryless scenario.   Our results imply that the difference of two smoothed entropic quantities characterizing these one-shot capacities serves as a one-shot analogue of the mutual information, since it reduces to the mutual information between the output of a channel and a system purifying its input, in the asymptotic, memoryless setting.


The paper is organized as follows. We begin with some notations and definitions of various one-shot entropic quantities in Section~\ref{SEC_DEF}. In Section~\ref{SEC_EAQC}, we first introduce the one-shot entanglement-assisted quantum communication protocol and then  give upper and lower bounds on the corresponding capacity in Theorem~\ref{one-eaq1}. The proof of the  upper bound is given in the same section, whereas the proof of the lower bound is given in Appendix B. The case of one-shot entanglement-assisted classical communication is considered in Section~\ref{SEC_MAIN}, and the bounds on the corresponding capacity is given in Theorem~\ref{one-eacap}, the proof of which is given in the same section. In Section~\ref{SEC_ASYM}, we show how our results in the one-shot setting can be used to recover the known single-letter formulas in the asymptotic, memoryless scenario.  Finally, we conclude in Section~\ref{SEC_CON}.

\section{Notations and Definitions}\label{SEC_DEF}
Let $\cB(\cH)$ denote the algebra of linear operators acting on a finite-dimensional Hilbert space $\cH$, and let ${\cD}({\cH})\subset \cB(\cH)$ be the set of positive operators of unit trace (states):
\[
{\cD}({\cH})=\{\rho\in\cB(\cH): \tr\rho=1\}.
\]
Furthermore, let
\[
{\cD}_{\leq}({\cH}):=\{\rho\in\cB(\cH): \tr\rho\leq1\}.
\]
Throughout this paper, we restrict our considerations to finite-dimensional Hilbert spaces and denote the dimension of a Hilbert space ${\cH}_A$ by $|A|$.

For any given pure state $|\psi\> \in {\cH}$, we denote the projector $|\psi\>\<\psi|$ simply as $\psi$. For an operator $\omega^{AB} \in \cB(\cH_A \otimes \cH_B)$, let $\omega^{A} := \tr_B \omega^{AB}$ denote its restriction to the subsystem $A$. For given orthonormal bases $\{|i^{A}\rangle\}_{i=1}^d$ and $\{|i^{B}\rangle\}_{i=1}^d$ in isomorphic Hilbert spaces
${\cal{H}}_{A}\simeq{\cal{H}}_B\simeq{\cH}$ of dimension $d$, we define a maximally entangled state (MES) of Schmidt rank $d$ to be
\begin{equation}\label{MES-m}
|\Phi\>^{AB}= \frac{1}{\sqrt{d}} \sum_{i=1}^d |i^A\rangle\otimes |i^B\rangle.
\end{equation}
Let $\bbI_{A}$ denote the identity operator in ${\cal{B}}({\cal{H}}_A)$, and let $\tau^A := \bbI_A/|A|$ denote the completely mixed state in
 ${\cal{D}}({\cal{H}}_A)$.

In the following we denote a completely positive trace-preserving (CPTP)
map $\cE: \cB({\cH}_A) \mapsto \cB({\cH}_B)$ simply as
$\cE^{A\to B}$, and denote the identity map as ${\id}$. Similarly, we denote an isometry $U: {\cH}_A \mapsto {\cH}_B \otimes {\cH}_C$ simply as $U^{A\to BC}$.

The trace distance between two operators $A$ and $B$ is given by
\begin{equation}\nonumber
\N{A-B}_1 := \tr\bigl[\{A \ge B\}(A-B)\bigr] - \tr\bigl[\{A <  B\}(A-B)\bigr],
\end{equation}
where $\{A\ge B\}$ denotes the projector onto the subspace where the operator $(A-B)$ is non-negative, and
$\{A<B\}:=\bbI-\{A\ge B\}$. The fidelity of two states $\rho$ and $\sigma$ is defined as
\begin{equation}\label{fidelity-aaa}
F(\rho, \sigma):= \tr \sqrt{\sqrt{\rho} \sigma \sqrt{\rho}}
=\N{\sqrt{\rho}\sqrt{\sigma}}_1.
\end{equation}
Note that the definition of fidelity can be naturally extended to subnormalized states. The trace distance between two states $\rho$ and $\sigma$ is related to the fidelity $F(\rho, \sigma)$ as follows (see e.~g.~\cite{Anonymous:2007vn}):
\begin{equation}
1-F(\rho,\sigma) \leq \frac{1}{2} \N{\rho - \sigma}_1 \leq \sqrt{1-F^2(\rho, \sigma)}.
\label{fidelity}
\end{equation}
The \emph{entanglement fidelity} of a state $\rho^Q\in\cD(\cH_Q)$, with purification $\ket{\Psi}^{RQ}$, with respect to a CPTP map $\cA:\cB(\cH_Q)\mapsto\cB(\cH_Q)$ is defined as
\begin{equation}\label{Fe}
F_e(\rho^Q,\cA):=\langle\Psi^{QR}|(\id^R\otimes\cA)(\proj{\Psi}^{RQ})|\Psi^{RQ}\rangle.
\end{equation}
We will also make use of the following fidelity criteria. The \emph{minimum fidelity} of a map $\cT:\cB(\cH)\mapsto\cB(\cH)$ is defined as
\begin{equation}\label{min_fidelity}
F_{\min}(\cT):=\min_{\ket{\phi}\in\cH} \langle\phi|\cT(\proj{\phi})|\phi\rangle.
\end{equation}
The \emph{average fidelity} of a map $\cT:\cB(\cH)\mapsto\cB(\cH)$ is defined as
\begin{equation}\label{ave_fidelity}
F_{\rm{av}}(\cT):= \int d\phi \langle\phi|\cT(\proj{\phi})|\phi\rangle.
\end{equation}

The results in this paper involve various entropic quantities. The von Neumann entropy of a state $\rho^A\in\cD(\cH_A)$ is given by $H(A)_\rho = - \tr \rho^A \log \rho^A$. Throughout this paper we take the logarithm to base $2$.  For any state $\rho^{AB} \in \cD(\cH_{A}\otimes\cH_B)$ the quantum mutual information is defined as
\be
\label{mutual}
I(A:B)_\rho:= H(A)_\rho +  H(B)_\rho- H(AB)_\rho.
\ee
The following generalized relative entropy quantity, referred to as the max-relative entropy, was introduced in \cite{Datta:2009fy}:
\begin{definition}\label{DEF_MAX_RELATIVE}
The max-relative entropy of two operators $\rho\in\cD_{\leq}(\cH)$ and $\sigma\in\cB(\cH)$ is defined as
\begin{equation}
D_{\max}(\rho||\sigma):= \log\min\{\lambda:\rho\leq\lambda\sigma\}.
\end{equation}
\end{definition}
We also use the following min- and max- entropies defined in \cite{Tomamichel:2010ff,Tomamichel:2009dd,Renner:2005vm}:
\begin{definition}
Let $\rho^{AB}\in\cD_{\leq}(\cH_{A}\otimes \cH_B)$. The min-entropy of $A$ conditioned on $B$ is defined as
\[
H_{\rm min}(A|B)_\rho=\max_{\sigma^B\in\cD(\cH_B)}\left[-D_{\max}(\rho^{AB}||\bbI_A\otimes\sigma^B)\right].
\]
\end{definition}

\begin{definition} For any $\rho\in\cD(\cH)$, we define the $\eps$-ball around $\rho$ as follows
\[
\cB^\eps(\rho)=\{{\overline{\rho}}\in\cD_{\leq}(\cH):F^2({\overline{\rho}}, \rho)\geq1-\eps^2\}.
\]
\end{definition}

\begin{definition}
Let $\eps\geq0$ and $\rho^{AB}\in\cD(\cH_{A}\otimes \cH_B)$. The $\eps$-smoothed min-entropy of $A$ conditioned on $B$ is defined as
\[
H_{\rm min}^\eps(A|B)_\rho =\max_{\overline{\rho}^{AB}\in\cB^\eps(\rho^{AB})} H_{\min}(A|B)_{\overline{\rho}}.
\]
\end{definition}
The max-entropy is defined in terms of the min-entropy via the following duality relation \cite{Tomamichel:2010ff,Tomamichel:2009dd,Konig:2009et}:

\noindent
\begin{definition}\label{LEM_DUALITY}
Let $\rho^{AB}\in\cD(\cH_{A}\otimes\cH_B)$ and let $\rho^{ABC}\in\cD(\cH_{A}\otimes\cH_B\otimes\cH_C)$ be an arbitrary purification of $\rho^{AB}$. Then for any $\eps\geq0$
\be\label{EQ_dual}
H_{\rm max}^\eps(A|C)_\rho := - H_{\rm min}^\eps(A|B)_\rho.
\ee
\end{definition}
In particular, if $\rho^{AB}$ is a pure state, then
\be
H_{\rm min}^\eps(A|B)_\rho = - H_{\rm max}^\eps(A)_\rho.
\ee
For any state $\rho^{AB} \in \cD(\cH_{A}\otimes\cH_B)$, the smoothed max-entropy can be equivalently expressed as \cite{Tomamichel:2010ff,Konig:2009et}
\be\label{88}
H_{\rm max}^\eps(A|B)_\rho := \min_{{\overline{\rho}}^{AB}\in
\cB^\eps(\rho^{AB})}
H_{\rm max}(A|B)_{\overline{\rho}},
\ee
where
\be\label{star}
H_{\rm max}(A|B)_{\overline{\rho}}= \max_{\sigma^B \in \cD(\cH_B)}2 \log
F\bigl({\overline{\rho}^{AB}},\bbI_A \otimes\sigma^B\bigr).
\ee
Moreover, for any $\rho^A \in \cD_{\leq}(\cH_{A})$,
\be
H_{\rm max}(A)_\rho := 2 \log \tr {\sqrt{\rho^A}}. \label{EQ_S_half}
\ee
Various properties of the entropies defined above, which we employ in our
proofs, are given in Appendix \ref{APPENDIX_A}.

\section{One shot Entanglement-Assisted Quantum Capacity of a Quantum Channel}
\label{SEC_EAQC}

\begin{figure}
\centering
\def\svgwidth{1.0\columnwidth}
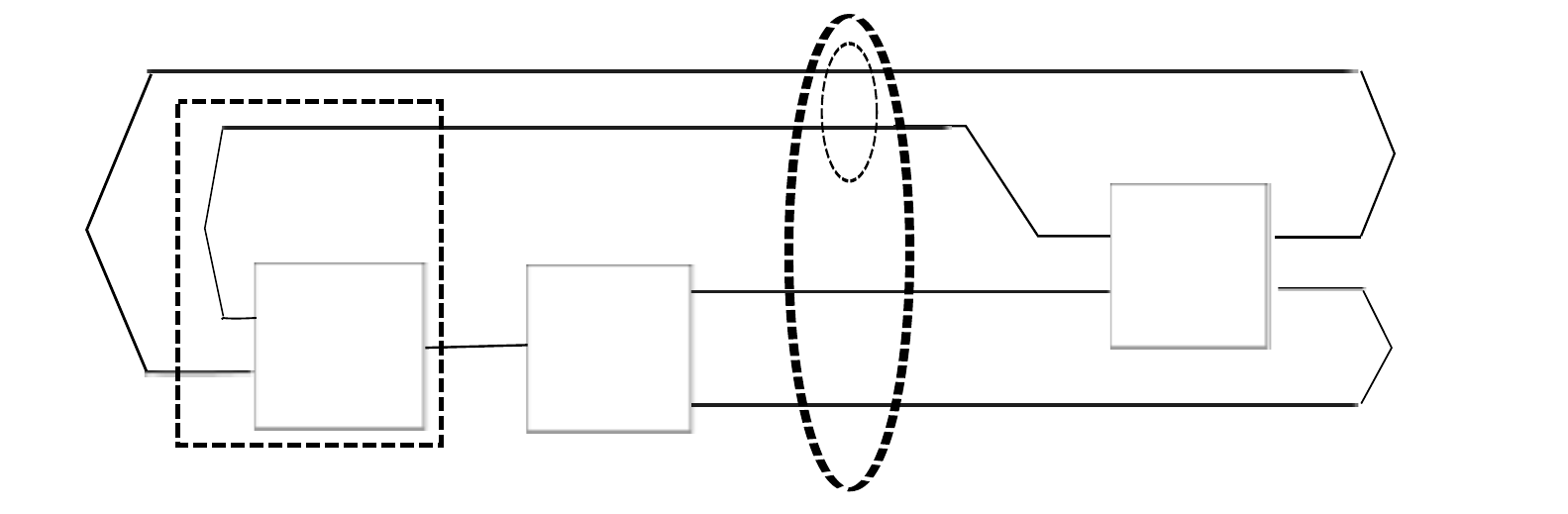
\caption{One-shot entanglement-assisted entanglement transmission protocol. The task here is for Alice to transmit her half of the maximally entangled state $A_0$, that she shares with an inaccessible reference $R$, to Bob with the help of a prior shared maximally entangled state $\ket{\Phi}^{A_1B_1}$ between her and Bob. In order to achieve this goal, Alice performs an encoding operation $\cE$ on systems $A_0A_1$, and sends the resulting system $A'$ through the channel $\cN$ (with Stinespring extension $U_{\cN}$). Bob then performs his decoding operation $\cD$ on the channel output $B$ and his half of the maximally entangled state $B_1$ so that finally the maximally entangled state $\ket{\Phi}^{RA_0}$ is shared between Bob and the reference $R$. The dashed rectangle denotes the CPTP map ${\widetilde{\cE}}$ defined through equation~\reff{eq_dec00}.}
\label{Fig_father}
\end{figure}

Entanglement-assisted quantum information transmission through a quantum channel is also referred to as the ``father'' protocol \cite{Devetak:2004go,Devetak:2008eb}. The goal of this section is to analyse the one-shot version of this protocol. In order to do so, we first study the protocol of one-shot entanglement assisted entanglement transmission through a quantum channel, which is detailed below. We obtain bounds on its capacity in terms of smoothed entropic quantities,  and then prove how these bounds readily yield bounds on the capacity of the one-shot ``father''  protocol.

The  one-shot $\eps$-error entanglement-assisted \emph{entanglement transmission} protocol is as follows (see Fig.~\ref{Fig_father}). The goal is for Alice to transmit half of a maximally entangled state, $\ket{\Phi}^{RA_0}$, that she shares with a reference $R$, to Bob through a quantum channel $\cN$, with the help of a maximally entangled state $\ket{\Phi}^{A_1B_1}$ which she initially shares with him, such that finally the maximally entangled state $\ket{\Phi}^{RA_0}$ is shared between Bob and the reference $R$. We denote the latter as $\Phi^{RB_0}$ to signify that Alice's system $A_0$ has been transferred to Bob. Note that  $\log |A_1|$ denotes the number of ebits of entanglement consumed in the protocol and $\log |A_0|$ denotes the number of qubits transmitted from Alice to Bob. We require that Alice achieves her goal up to an accuracy $\eps$, for some fixed $ 0 <\eps <1$.

Let Alice and Bob initially share a maximally entangled state $\ket{\Phi}^{A_1B_1}$. Without loss of generality, a one-shot $\eps$-error entanglement-assisted entanglement transmission code of rate $r=\log|A_0|$ can then be defined by a pair of encoding and decoding operations $(\cE,\cD)$ as follows:

\begin{enumerate}
\item Alice performs some encoding (CPTP map) $\cE^{A_0A_1\to A'}$. Let us denote the encoded state as
\begin{equation*}\label{eq_eaq_input}
{\xi}^{B_1RA'}=(\id^{B_1R}\otimes {\cE}^{A_0A_1\to A' })({\Phi}^{A_0R}\otimes{\Phi}^{A_1B_1}).
\end{equation*}
and denote the channel output state as
\begin{equation}\label{father_chout}
\ket{\Omega}^{B_1RBEE_1}=(\bbI_{B_1R}\otimes U_{\cN}^{A'\to BE})\ket{\xi}^{B_1RA'E_1}
 \end{equation}
where $U_{\cN}^{A'\to BE}$ is a Stinespring extension of the quantum channel $\cN^{A'\to B}$ and $\ket{\xi}^{B_1RA'E_1}$ is some purification of the encoded state $\xi^{B_1RA'}$ .
\item After receiving the channel output $B$, Bob performs a decoding operation $\cD^{B_1B\to B_0 B'}$ on the systems $B_1,B$ in his possession. Denote the output state of Bob's decoding operation by
\begin{equation}\label{eq_eaq_dec}
\widehat{\Omega}^{B_0B'REE_1}:=(\id^{R E E_1}\otimes \cD^{B_1B\to B_0B'})(\Omega^{B_1RBEE_1}).
\end{equation}
\end{enumerate}
For a quantum channel $\cN$, and any fixed $0 < \eps < 1$, a real number $r:= \log |A_0|$ is said to be an 
$\eps$-achievable rate if there exists a pair $(\cE,\cD)$ of encoding and decoding maps such that, 
\begin{multline} \label{eq_dec00}
F_e(\tau^{A_0},\widetilde{\cD}\circ\cN\circ\widetilde{\cE}) = \langle{\Phi}^{RA_0}|\widehat{\Omega}^{RB_0}|{\Phi}^{RA_0}\rangle \geq 1-\eps,
\end{multline}
where 
$\widetilde{\cD}=\tr_{B'}\circ\cD$, $\widehat{\Omega}^{RB_0}=\tr_{B'EE_1}\widehat{\Omega}^{B_0B'REE_1}$, and ${\widetilde\cE}^{A_0\to A' B_1}$ is the CPTP map defined through the relation\footnote{This relation uniquely defines the map $\widetilde{\cE}$ because it specifies its Choi-Jamiolkowski state.}
$$(\id^R\otimes\widetilde{\cE})(\Phi^{A_0R}):=(\id^{RB_1}\otimes\cE)(\Phi^{A_0R}\otimes\Phi^{A_1B_1}).$$
Note that 
\[
\widehat{\Omega}^{RB_0}:=(\id^R\otimes\widetilde{\cD}\circ\cN\circ\widetilde{\cE})(\Phi^{RA_0}).
\]

\begin{definition}[Entanglement transmission fidelity]
For any Hilbert space $\cH_{A_0}$, the entanglement transmission fidelity of a quantum channel $\cN$, {{in the presence of an assisting maximally entangled state}}, is defined as follows.
\be
\mathbf{F_e}(\cN,\cH_{A_0}):=\max_{\widetilde{\cE},\widetilde{\cD}} F_e(\tau^{A_0},\widetilde{D}\circ\cN\circ\widetilde{\cE}),
\ee
where $\tau^{A_0}$ denotes a completely mixed state in $\cH_{A_0}$, and $\widetilde{\cE}$, $\widetilde{\cD}$ are CPTP maps. 
\end{definition}

\begin{definition}
Given a quantum channel $\cN^{A'\to B}$ and a real number $0<\eps<1$, the one-shot $\eps$-error entanglement-assisted entanglement transmission capacity of $\cN$ is defined as follows:
\begin{equation}\nonumber
E^{(1)}_{{\rm ea},\eps}(\cN):=\max\{\log|A_0|: \mathbf{F_e}(\cN,\cH_{A_0})\geq 1-\eps\}.
\end{equation}
\end{definition}

Our main result of this section is the following theorem, which gives upper 
and lower bounds on $E^{(1)}_{{\rm ea},\eps}(\cN)$ in terms of smoothed min- and max- entropies.
\begin{theorem}\label{one-eaq1}
For any fixed $0<\eps<1$, $\kappa=2\sqrt{2\sqrt{4\eps}}$, and $\eps'$ being 
a positive number such that $\eps=2\sqrt{2\sqrt{27\eps'}+27\eps'}$,
the one-shot $\eps$-error entanglement-assisted entanglement transmission capacity of a noisy quantum channel $\cN^{A'\rightarrow B}$, in the case in which the assisting resource is a maximally entangled state, satisfies the following bounds:
\begin{multline}\label{ea_eaq_bounds}
 \max_{\phi^{A'}\in\cD(\cH_{A'})}\frac{1}{2}\left[H_{\min}^{\eps'}(A)_\psi - H_{\max}^{\eps'}(A|B)_{\psi}\right]  + 2 \log {\eps'}   \\
\le  E^{(1)}_{{\rm ea},\eps}(\cN)  \le   \\
\max_{\phi^{A'}\in\cD(\cH_{A'})} \frac{1}{2}
\left[H_{\min}^\eps(A)_\psi - H_{\max}^{2\eps+2\sqrt{\kappa}}(A|B)_{\psi}\right] +\log\frac{\sqrt{2}}{\eps}, 
\end{multline}
where the maximisation is over all possible inputs to the channel.
In the above, ${\psi}^{ABE}$ denotes the following state
\be\label{ea_state000}
\ket{\psi}^{ABE}:=(\bbI_A\otimes U^{A'\to BE}_{\cN})\ket{\phi}^{AA'}
\ee
where $U_{\cN}^{A'\to BE}$ is a Stinespring isometry realizing the channel, and $\ket{\phi}^{AA'}$ denotes a purification of the input $\phi^{A'}$ to the
channel.
\end{theorem}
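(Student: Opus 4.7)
The plan is to prove the two bounds separately. For the upper bound I would use a direct entropic argument based on fidelity estimates, data processing, and a chain rule for smoothed max-entropy; for the lower bound I would use a one-shot decoupling argument combined with Uhlmann's theorem. The conceptual key in both cases is to identify the purifying system $A$ appearing in the theorem statement with the composite system $RB_1$ (reference plus Bob's half of the pre-shared MES): since the encoding $\cE$ acts only on Alice's local systems $A_0A_1$ and the channel acts only on $A'$, the marginal $\Omega^{RB_1}=\tau^R\otimes\tau^{B_1}$ is preserved throughout the protocol, and $RB_1$ is a valid joint purification of the channel input $\xi^{A'}$.

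For the upper bound, suppose $(\cE,\cD)$ is an $\eps$-error code with rate $\log|A_0|$. First I would translate the fidelity condition $\langle\Phi^{RA_0}|\widehat{\Omega}^{RB_0}|\Phi^{RA_0}\rangle\ge1-\eps$ via~\reff{fidelity} into the assertion that $\Phi^{RA_0}$ lies in a $\sqrt{\eps}$-ball around $\widehat{\Omega}^{RB_0}$. Since $H_{\max}(R|A_0)_\Phi=-\log|A_0|$ for the MES (using the duality relation and $H_{\min}(R)_\Phi=\log|A_0|$), the definition of smoothed max-entropy immediately yields $H_{\max}^{\sqrt{\eps}}(R|B_0)_{\widehat{\Omega}}\le-\log|A_0|$. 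Next I would apply data processing for smoothed max-entropy under the CPTP map $\widetilde{\cD}$ taking $B_1B$ to $B_0$, giving $H_{\max}^{\sqrt{\eps}}(R|BB_1)_\Omega\le-\log|A_0|$. A chain rule of the form $H_{\max}^{\delta}(RB_1|B)\le H_{\max}^{\sqrt{\eps}}(R|BB_1)+\log|B_1|+O(\log(1/\eps))$ then yields an upper bound on $H_{\max}^\delta(RB_1|B)_\Omega$. Combining with $H_{\min}^\eps(RB_1)_\Omega\ge\log|A_0|+\log|B_1|$ (from the product marginal $\tau^R\otimes\tau^{B_1}$) gives $\tfrac{1}{2}[H_{\min}^\eps(A)_\psi-H_{\max}^\delta(A|B)_\psi]+\log(\sqrt{2}/\eps)\ge\log|A_0|$ for the specific input $\phi^{A'}=\xi^{A'}$ induced by the code, and maximization over $\phi$ produces the stated bound. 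The cancellation of the $\log|B_1|$ terms between the min- and max-entropy contributions is what makes the final formula independent of the amount of pre-shared entanglement.

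For the lower bound (achievability), I would use a one-shot decoupling argument. Given a target channel input $\phi^{A'}$ with purification $\phi^{AA'}$ and corresponding channel output $\psi^{ABE}$ as in~\reff{ea_state000}, Alice would apply a Haar-random unitary on her side and then send the relevant part through the channel. The one-shot decoupling theorem ensures that for any rate $r\le\tfrac{1}{2}[H_{\min}^{\eps'}(A)_\psi-H_{\max}^{\eps'}(A|B)_\psi]+2\log\eps'$, the resulting marginal on the reference--environment pair $(R,E)$ is close (on average over $U$) to a product state $\tau^R\otimes\omega^E$. By Uhlmann's theorem, such decoupling implies the existence of a decoding operation $\cD$ for Bob (acting on $B_1$ together with the channel output $B$) that recovers the MES $\Phi^{RA_0}$ on $(R,B_0)$ with fidelity at least $1-\eps$, and a standard Markov/averaging argument then extracts a deterministic code achieving the stated rate.

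The main obstacle in both parts is the careful propagation of smoothing parameters: applying the fidelity/trace-distance inequalities twice and the chain rule once produces the somewhat intricate parameter $2\eps+2\sqrt{\kappa}$ with $\kappa=2\sqrt{2\sqrt{4\eps}}$ in the upper bound, while translating decoupling fidelity to entanglement transmission fidelity and then through the derandomization step yields the implicit relation $\eps=2\sqrt{2\sqrt{27\eps'}+27\eps'}$ in the lower bound. The individual conceptual steps (fidelity conversion, data processing, chain rule, decoupling-plus-Uhlmann) are essentially standard tools; the main technical effort is assembling them with the tightest possible smoothing parameters and verifying that the dimension factors cancel correctly to produce a bound that is independent of the size of the entanglement resource.
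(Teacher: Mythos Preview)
Your lower-bound plan (decoupling plus Uhlmann, then derandomise) is exactly what the paper does in Appendix~B, invoking Dupuis' one-shot decoupling theorem and reading off the rate/entanglement-cost pair from the two decoupling conditions.

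For the upper bound your route differs from the paper's. You propose to stay on Bob's side: bound $H_{\max}^{\sqrt{\eps}}(R|B_0)_{\widehat{\Omega}}$ directly from the fidelity condition, push it back to $H_{\max}^{\sqrt{\eps}}(R|BB_1)_{\Omega}$ by data processing, and then use a smoothed max-entropy chain rule to reach $H_{\max}^{\delta}(RB_1|B)_{\Omega}$. The paper instead passes to the environment: from the fidelity condition it applies Uhlmann to obtain the decoupling estimate $\|\Omega^{RE}-\tau^R\otimes\sigma^E\|_1\le\kappa$, then uses duality $-H_{\max}^{\cdot}(RB_1|B)=H_{\min}^{\cdot}(RB_1|E)$, a \emph{min}-entropy chain rule (Lemma~\ref{chain}), and Lemma~\ref{LEM_Dec} to turn the decoupling estimate into $H_{\min}^{2\sqrt{\kappa}}(R|E)\ge\log|A_0|$. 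The two approaches are morally dual, and yours is arguably cleaner conceptually; but note that the paper's toolkit only contains the min-entropy chain rule, so the detour through $E$ is what lets it avoid invoking a max-entropy chain rule, and it is also what produces the specific smoothing parameter $2\eps+2\sqrt{\kappa}$ in the statement.

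There is, however, a genuine gap in your converse. You assert that ``$RB_1$ is a valid joint purification of the channel input $\xi^{A'}$,'' but this is false for a general CPTP encoder $\cE^{A_0A_1\to A'}$: the encoded state $\xi^{RB_1A'}$ is mixed, and the theorem's state $\psi^{ABE}$ requires $A$ to purify the channel input. The paper handles this by first invoking Barnum--Knill--Nielsen (Theorem~4 of \cite{barnum}) to replace the CPTP encoder by a partial isometry at the price $\eps\mapsto 2\eps$; only then is $\Omega^{RB_1BE}$ pure and the identification $A\equiv RB_1$ legitimate. Without this step your final identification with the theorem's $\psi$ fails (taking $A=RB_1E_1$ instead would break both the $H_{\min}$ lower bound and the $\log|B_1|$ cancellation), and the extra factor of $2$ is precisely what feeds into the $4\eps$ inside $\kappa$.
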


\emph{Proof of the lower bound in \reff{ea_eaq_bounds}.}
The proof is given in Appendix~\ref{app_2}.

\emph{Proof of the upper bound in \reff{ea_eaq_bounds}.}
As stated in the beginning of Sec.~\ref{SEC_EAQC}, any one shot $\eps$-error entanglement-assisted entanglement transmission protocol (see Fig.~\ref{Fig_father}) of a quantum channel $\cN$ consists of a pair $(\widetilde{\cE},\widetilde{\cD})$ of encoding-decoding maps such that the condition \reff{eq_dec00} holds. However, this condition 
along with Theorem 4 of \cite{barnum} imply 
that there exists a partial isometry $V^{A_0 \to A'}$ such that
\be\label{fid2}
F_e(\tau^{A_0},\cA \circ V) \geq 1-2\eps,
\ee
where $\cA := \widetilde{\cD}\circ\cN$, with $\cN\equiv \cN^{A' \to B}$
being the channel and $\widetilde{\cD}$ being the
decoding map ${{\cD}^{BB_1\to B_0B'}}$ followed by a partial trace over $B'$. The condition \reff{fid2} in turn 
implies that 
\be\label{fid0}
F(\widehat{\Omega}^{RB_0}, \Phi^{RA_0}) \ge 1 - 2\eps,
\ee
where \[
\widehat{\Omega}^{RB_0}:=(\id^{R} \otimes \widetilde{\cD}\circ\cN\circ V)(\Phi^{RA_0}).
\]
By using Uhlmann's theorem \cite{Uhlmann}, and the second inequality in 
\reff{fidelity}, we infer from \reff{fid0} that
\begin{equation}\label{eq_decouple}
\|\widehat{\Omega}^{B_0B'RE}-\Phi^{RB_0}\otimes\sigma^{B'E}\|_1\leq 2\sqrt{2\sqrt{4\eps}}, 
\end{equation}
for some state $\sigma^{B'E}$, where $\widehat{\Omega}^{B_0B'RE}$ denotes the output state of Bob's decoding operation, which in this case is defined by 
\begin{equation}\label{eq_eaq_dec2}
\widehat{\Omega}^{B_0B'RE}:=(\id^{R E}\otimes \cD^{B_1B\to B_0B'})(\Omega^{B_1RBE}),
\ee
with
\be\label{father_chout00}
\ket{\Omega}^{B_1RBE}=(\bbI^{B_1R}\otimes U_{\cN}^{A'\to BE})\ket{\xi}^{RA'B_1}
\ee
where $\ket{\xi}^{RA'B_1}$ denotes the state resulting from the 
isometric encoding $V$ on $\ket{\Phi}^{RA_0}$.

By the monotonicity of the trace distance under the partial trace, we have
\begin{equation}\label{upa}
\|\widehat{\Omega}^{RE}-\tau^R\otimes\sigma^{E}\|_1\leq 2\sqrt{2\sqrt{4\eps}}:=\kappa.
\end{equation}
From \reff{eq_eaq_dec2} it follows that $\Omega^{RE}=\widehat{\Omega}^{RE}$, 
since the decoding map does not act on the systems $R$ and $E$. This fact, together with \reff{upa} implies that $\tau^R\otimes\sigma^{E}\in\cB^{2\sqrt{\kappa}}(\Omega^{RE})$ and $\tau^R\in\cB^{2\sqrt{\kappa}}(\Omega^R)$.

Let us set $A\equiv R B_1$ in the state $\Omega^{RB_1BE}$ defined in \reff{father_chout00}, and let $\eps'\geq 0$ and $\eps''=2\sqrt{\kappa}$. Then 
\begin{eqnarray}
&&- H_{\rm max}^{\eps+2\eps'+\eps''}(A|B)_\Omega \nonumber \\
&=& H_{\rm min}^{\eps+2\eps'+\eps''}(RB_1|E)_\Omega \nonumber \\
&\geq& H_{\rm min}^{\eps''}(R|E)_\Omega+H_{\rm min}^{\eps'}(B_1|RE)_\Omega -\log\frac{2}{\eps^2} \nonumber \\
&\geq& H_{\rm min}(R)_\tau+H_{\rm min}^{\eps'}(B_1|RE)_\Omega -\log\frac{2}{\eps^2} \nonumber \\
&\geq& H_{\rm min}(R)_\tau+H_{\rm min}^{\eps'}(B_1|REB)_\Omega -\log\frac{2}{\eps^2} \nonumber \\
&=& H_{\rm min}(R)_\tau-H_{\rm max}^{\eps'}(B_1)_\Omega -\log\frac{2}{\eps^2} \nonumber \\
&\geq& \log|A_0|-\log|A_1|-\log\frac{2}{\eps^2}.\label{EQ_conv060}
\end{eqnarray}
The first equality holds because of the duality relation \reff{EQ_dual} between the conditional smoothed min- and max- entropies. The first inequality follows from the chain rule for smoothed min-entropies (Lemma \ref{chain}). The second inequality follows from Lemma~\ref{LEM_Dec} of Appendix A and $\tau^R\otimes\sigma^{E}\in\cB^{2\sqrt{\kappa}}(\Omega^{RE})$, whereas the third inequality follows from Lemma~\ref{LEM_COND} of Appendix A. The second equality follows from the duality relation (\ref{EQ_dual}) and the fact that $\Omega^{RB_1BE}$ is a pure state. 
The last inequality holds because
\begin{eqnarray*}
\log|A_0|&=&\log |R| =H_{\min}(R)_\tau \\
\log|A_1|&=&\log|B_1| \geq H_{\rm max}^{\eps'}(B_1)_\Omega.
\end{eqnarray*}
Moreover, for any $\eps\geq0$,
\begin{equation}\label{eq_eaq_conv07}
H_{\min}^\eps(A)_\Omega\geq H_{\min}(RB_1)_{\tau\otimes\tau}=\log|A_0|+\log|A_1|
\end{equation}
since $\Omega^{RB_1}=\tau^{R}\otimes\tau^{B_1}$. Combining \reff{EQ_conv060} and \reff{eq_eaq_conv07} and choosing $\eps'=\eps$ yields
\[
\log|A_0| \leq \frac{1}{2}\left[H_{\min}^\eps(A)_\Omega-H_{\max}^{2\eps+2\sqrt{\kappa}}(A|B)_\Omega \right] +\log\frac{\sqrt{2}}{\eps}.
\]
This completes the proof of the upper bound in \reff{ea_eaq_bounds} since we can choose $\Omega^{RB_1BE}$ to be the pure state corresponding to the 
channel output (see \reff{father_chout00}) when the optimal isometric encoding is applied.

\subsection{One-shot entanglement-assisted quantum (EAQ) capacity of a quantum channel}

\begin{definition} [Minimum output fidelity]
For any Hilbert space $\cH$, we define the 
minimum output fidelity of a quantum channel $\cN$, {{in the presence of an assisting maximally entangled state}}, as follows:
\be\label{28}
\mathbf{F}_{\min}(\cN,\cH):= \max_{\widetilde{\cE},\widetilde{\cD}}\min_{\ket{\phi}\in\cH} F^2(\ket{\phi},\widetilde{\cD}\circ\cN\circ\widetilde{\cE}(\proj{\phi})),
\ee
where $\widetilde{\cE}$, $\widetilde{\cD}$ are CPTP maps. 
\end{definition}
\begin{definition}
For any fixed $0<\eps<1$, the one-shot entanglement-assisted quantum (EAQ) capacity $Q^{(1)}_{\rm{ea},\eps}(\cN)$ of a quantum channel $\cN^{A'\to B}$ is defined as follows:
\begin{equation}\label{qeacap}
Q^{(1)}_{\rm{ea},\eps}(\cN):=\max\{\log|\cH|: \mathbf{F}_{\min}(\cN,\cH)\geq 1-\eps\}.
\end{equation}
\end{definition}

The following theorem allows us to relate the one-shot entanglement-assisted entanglement transmission capacity $E_{\rm{ea},\eps}^{(1)}(\cN)$ to the one-shot entanglement-assisted quantum (EAQ) capacity $Q_{\rm{ea},\eps}^{(1)}(\cN)$.
\begin{theorem}\label{THM_EAQC}
For any fixed $0<\eps<1$, for a quantum channel $\cN^{A'\to B}$, and
an assisting entanglement resource in the form of a maximally entangled state, 
\begin{equation}\label{eq_EAQC}
E_{\rm{ea},\eps}^{(1)}(\cN) -1 \leq Q_{\rm{ea},2\eps}^{(1)}(\cN)\leq E_{\rm{ea},4\eps}^{(1)}(\cN).
\end{equation}
\end{theorem}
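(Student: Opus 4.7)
\emph{Plan.} I would establish the two inequalities of \reff{eq_EAQC} separately, in each case exploiting the standard relations among the minimum, average, and entanglement fidelities of the ``effective channel'' $\cA:=\widetilde{\cD}\circ\cN\circ\widetilde{\cE}$ attached to a one-shot entanglement-assisted protocol (as in \reff{eq_dec00}).

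For the right inequality $Q^{(1)}_{\rm{ea},2\eps}(\cN)\leq E^{(1)}_{\rm{ea},4\eps}(\cN)$, I would fix an optimal one-shot EAQ code of rate $r=Q^{(1)}_{\rm{ea},2\eps}(\cN)$: encoding-decoding maps $(\widetilde{\cE},\widetilde{\cD})$ with $F_{\min}(\cA)\geq 1-2\eps$ on a Hilbert space of dimension $d=2^r$. Since the minimum fidelity trivially lower-bounds the Haar-average fidelity $F_{\rm{av}}(\cA)$, one has $F_{\rm{av}}(\cA)\geq 1-2\eps$. Plugging into the standard identity
\begin{equation*}
F_{\rm{av}}(\cA)=\frac{d\,F_e(\tau^{A_0},\cA)+1}{d+1}
\end{equation*}
and solving for the entanglement fidelity yields $F_e(\tau^{A_0},\cA)\geq 1-2\eps\,(d+1)/d\geq 1-4\eps$ for every $d\geq 1$. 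Hence the same $(\widetilde{\cE},\widetilde{\cD})$ is already a valid EAE code of dimension $d$ with error at most $4\eps$, giving $r\leq E^{(1)}_{\rm{ea},4\eps}(\cN)$.

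For the left inequality $E^{(1)}_{\rm{ea},\eps}(\cN)-1\leq Q^{(1)}_{\rm{ea},2\eps}(\cN)$, I would start from an optimal one-shot EAE code of rate $r=E^{(1)}_{\rm{ea},\eps}(\cN)$, with effective channel $\cA$ on $d=2^r$ satisfying $F_e(\tau^{A_0},\cA)\geq 1-\eps$. The plan is to extract from this an EAQ code on a subspace $\cH'\subset\cH_{A_0}$ of dimension $d/2$ with $F_{\min}\geq 1-2\eps$: exhibit an isometry $W:\cH'\to\cH_{A_0}$ whose range is a ``good'' subspace, and then use $\widetilde{\cE}$ precomposed with the embedding via $W$, together with $\widetilde{\cD}$ postcomposed with the projection onto $W\cH'$ (completed to a CPTP map in the standard way), as the new encoding-decoding pair. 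The resulting protocol has dimension $d/2$ and minimum fidelity at least $1-2\eps$, giving $r-1=\log\dim\cH'\leq Q^{(1)}_{\rm{ea},2\eps}(\cN)$.

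The main obstacle is producing this subspace $\cH'$: a worst-case-from-average-case extraction, turning $F_{\rm{av}}(\cA)\geq F_e(\tau^{A_0},\cA)\geq 1-\eps$ into a guarantee of pointwise fidelity at least $1-2\eps$ on a subspace of half the input dimension. The standard approach combines Markov's inequality (at most half of pure states in Haar measure have fidelity below $1-2\eps$) with a random-subspace/measure-concentration argument to produce such a $W$; the one-bit rate loss is precisely the price for converting the average-case bound into a worst-case one.
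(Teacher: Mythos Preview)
Your proposal follows essentially the same route as the paper's proof: both inequalities are obtained from the standard relations among $F_{\min}$, $F_{\rm{av}}$, and $F_e$ for the effective channel $\cA=\widetilde{\cD}\circ\cN\circ\widetilde{\cE}$. For the right inequality you use $F_{\min}\leq F_{\rm{av}}$ together with the Horodecki-type identity $F_{\rm{av}}=(d\,F_e+1)/(d+1)$, exactly as the paper does (their Lemma~\ref{ave_min}). For the left inequality you correctly identify the key step as a worst-case-from-average-case subspace extraction yielding $F_{\min}\geq 1-2\eps$ on a subspace of half the dimension.

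One remark on that step: the paper simply invokes Proposition~4.5 of Kretschmann--Werner \cite{KW2004} (their Lemma~\ref{min_ent}) as a black box, whereas you sketch a Markov-plus-random-subspace argument. Be aware that your sketch, as stated, does not obviously deliver the precise constants you need. Markov on the Haar measure tells you only that at most half the \emph{measure} of pure states has fidelity below $1-2\eps$; this does not by itself produce a \emph{subspace} of dimension $d/2$ on which the minimum fidelity is $\geq 1-2\eps$, and generic concentration bounds for random subspaces would introduce dimension-dependent corrections. The Kretschmann--Werner argument is more direct and gives exactly the factor-of-two loss in both error and dimension that you are claiming. So while your overall architecture is right and matches the paper, for the subspace step you should either cite the KW result or supply a sharper argument than the one you outlined.
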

\begin{IEEEproof}
The proof is given in Appendix~\ref{app_3}.
\end{IEEEproof}

\section{One shot Entanglement-Assisted Classical Capacity of a Quantum Channel}
\label{SEC_MAIN}%
We consider entanglement-assisted classical (EAC) communication through \emph{a single use} of a noisy quantum channel, in the case in which
the assisting resource is given by a maximally entangled state. The
scenario is depicted in Fig.~\ref{1-EAC}. The sender (Alice) and the receiver (Bob) initially share a maximally entangled state $\ket{\Phi}^{A_1B_1}$, where Alice possesses $A_1$ while Bob has $B_1$, and $\cH_{A_1} \simeq \cH_{B_1}$. The goal is for Alice to transmit classical messages labelled by the elements of the set $\cK=\{1,2,\cdots, |\cK|\}$ to Bob, through a single use of the quantum channel $\cN:\cB(\cH_{A'})\to \cB(\cH_B)$, with the help of the prior shared entanglement.

\begin{figure}
\centering
 \def\svgwidth{0.9\columnwidth}
 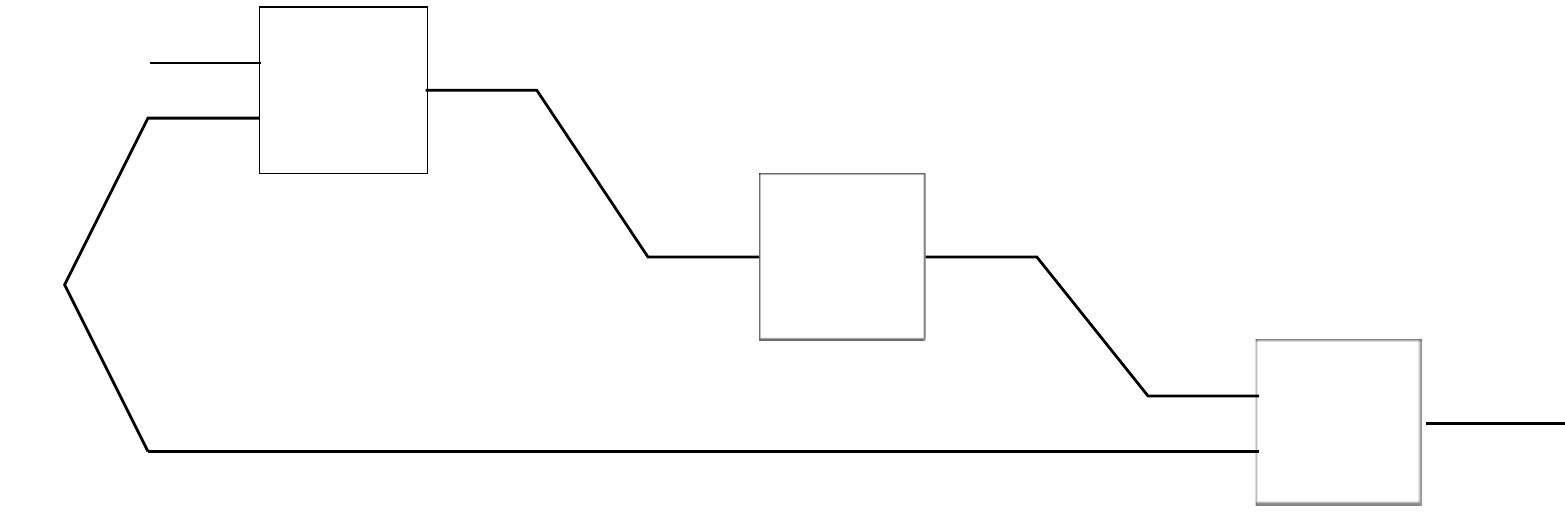
 \caption{One-shot entanglement-assisted classical communication. The sender Alice shares a maximally entangled state $\ket{\Phi}^{A_1B_1}$ with the receiver Bob before the protocol begins. Based on the classical message $k$, she performs some encoding operation $\cE_k$ on her half of the maximally entangled state  $A_1$ before sending it through the quantum channel $\cN$. After receiving the channel output $B$, Bob performs a POVM $\Lambda$ on the system $B$ and his half of the maximally entangled state $B_1$ in order to infer Alice's classical message.}
 \label{1-EAC}
\end{figure}

Without loss of generality, any EAC communication protocol can be assumed to have the following form: Alice encodes her classical messages into states of the system $A_1$ in her possession. Let the encoding (CPTP) map corresponding to her $k^{th}$ classical message be denoted by $\cE_k^{A_1 \to A'}$, for each $k\in \cK$. 
Alice then sends the system $A'$ through the noisy quantum channel $\cN^{A'\to B}$.
After Bob receives the channel output $B$, he performs a POVM $\Lambda:B_1B\to \widehat{K}$ on the system $B_1B$ in his possession, which yields the classical register $\widehat{K}$ containing his inference $\hat{k}$ of the message ${k} \in \cK$ sent by Alice.
%
\begin{definition}[\emph{One-shot $\eps$-error EAC capacity}]
Given a quantum channel $\cN^{A'\to B}$ and a real number $0<\eps<1$, the one-shot $\eps$-error entanglement-assisted classical capacity of $\cN$ is defined as follows:
\begin{equation}\label{pr}
C^{(1)}_{{\rm ea},\eps}(\cN):=\max\{\log |\cK|: \forall k\in\cK,\,\, \Pr[k\neq \hat{k}]\leq \eps\}
\end{equation}
where the maximization is over all possible encoding operations and POVMs.
\end{definition}
The following theorem gives bounds on the one-shot $\eps$-error EAC capacity of a quantum channel.
\begin{theorem}\label{one-eacap}
For any fixed $0<\eps<1$, the one-shot $\eps$-error entanglement-assisted classical capacity of a noisy quantum channel $\cN^{A'\rightarrow B}$, in the case in which the assisting resource is a maximally entangled state, satisfies the following bounds
\begin{multline}\label{ea_bounds}
\max_{\phi^{A'}} \left[H_{\min}^{\eps''}(A)_\psi - H_{\max}^{\eps''}(A|B)_{\psi}\right] + 4 \log {\eps''}-2  \\
\le  \ C^{(1)}_{{\rm ea},\eps}(\cN)  \le \\
\max_{\phi^{A'}} \left[H_{\min}^{4\eps}(A)_\psi - H_{\max}^{8\eps+2\sqrt{\kappa'}}(A|B)_{\psi}\right] + \log \frac{1}{2\sqrt{2}\eps} , 
\end{multline}
where the maximization is over all possible input states, ${\phi^{A'}}$ to
the channel, $\kappa'=2\sqrt{8\sqrt{\eps}}$ and $\eps'' >0$ is such that $\eps^2=\sqrt{{2\sqrt{27\eps''}+27\eps''}}$.  In the above, ${\psi}^{ABE}$ is the pure state defined in \reff{ea_state000}.
\end{theorem}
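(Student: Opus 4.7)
The plan is to exploit the operational relationship between EAC and EAQ furnished by quantum teleportation and superdense coding, which convert between classical and quantum resources at a factor-of-two rate. This pair of protocols explains, at least heuristically, the absence of the $\frac{1}{2}$ prefactor in~\reff{ea_bounds} compared to the corresponding EAQ bounds in Theorem~\ref{one-eaq1}.

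For the upper bound, I would start from an arbitrary $\eps$-error EAC code for $\cN$ with message set $\cK$ and convert it to an entanglement-assisted entanglement transmission code by teleportation. Using $\lfloor(\log|\cK|)/2\rfloor$ additional ebits (freely available in the EAC setting), Alice Bell-measures the qubits to be teleported together with her halves of those ebits, feeds the outcomes into the EAC code, and Bob applies the corresponding Pauli corrections after classical decoding. Writing the resulting channel as a mixture in which the ``correct classical decoding'' branch (of weight at least $1-\eps$) yields exact teleportation, one obtains $C^{(1)}_{\rm ea, \eps}(\cN) \le 2 E^{(1)}_{\rm ea, \eps}(\cN) + O(1)$. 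Substituting the upper bound of Theorem~\ref{one-eaq1}---routed through Theorem~\ref{THM_EAQC} so that the EAQ bound is applied at error parameter $4\eps$---then produces the $H_{\min}^{4\eps}$ term and the smoothing parameter $8\eps+2\sqrt{\kappa'}$ with $\kappa'=2\sqrt{8\sqrt{\eps}}$, matching the stated form.

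For the lower bound, I would reverse the direction. The achievability construction of Theorem~\ref{one-eaq1} produces an entanglement transmission code delivering $\log|A_0| \ge \frac{1}{2}[H_{\min}^{\eps''}(A)_\psi - H_{\max}^{\eps''}(A|B)_\psi] + 2\log\eps''$ qubits with entanglement-fidelity error controlled by $\eps''$. Because the induced channel $A_0 \to B_0$ is then close to the identity, Alice can use an additional maximally entangled state $\ket{\Phi}^{A_0 B_0}$ and apply one of $|A_0|^2$ generalized Pauli operators before invoking the code, letting Bob decode by a Bell measurement; this is just superdense coding run over the approximate quantum identity channel. The Fuchs--van de Graaf inequalities~\reff{fidelity} together with Uhlmann's theorem translate the entanglement-fidelity error $\eps''$ into a classical decoding error governed by $\eps^2 = \sqrt{2\sqrt{27\eps''}+27\eps''}$, while the additive $-2$ absorbs the rounding loss incurred when $|A_0|$ is not a power of two.

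The hard part will be the careful tracking of error parameters through both reductions. In particular, a na\"ive composition of teleportation with Theorems~\ref{THM_EAQC} and~\ref{one-eaq1} produces an additive constant of order $\log(1/\eps^2)$, whereas the statement claims the sharper $\log(1/(2\sqrt{2}\,\eps))$; obtaining the stated constant may require a more direct decoupling-style argument analogous to the upper-bound proof of Theorem~\ref{one-eaq1}, but with $\Omega^{RB_1BE}$ replaced by the purification of the classical-quantum state naturally associated with the EAC code. A symmetric care is needed on the achievability side to obtain the implicit equation $\eps^2 = \sqrt{2\sqrt{27\eps''}+27\eps''}$ exactly.
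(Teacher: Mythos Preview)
Your approach matches the paper's exactly. The converse is precisely the teleportation reduction you describe: assuming $C^{(1)}_{{\rm ea},\eps}(\cN)$ exceeds the claimed upper bound, the paper uses teleportation to conclude $Q^{(1)}_{{\rm ea},2\eps}(\cN)$ exceeds half of it, contradicting the bound obtained by composing Theorem~\ref{THM_EAQC} with the upper bound of Theorem~\ref{one-eaq1} at parameter $4\eps$. The achievability is likewise exactly your superdense-coding-on-top-of-father argument: the paper writes the one-shot father resource inequality $\langle\cN\rangle + e^{(1)}_{\eps}[qq] \ge_\eps q^{(1)}_{\eps}[q\to q]$, appends the superdense coding inequality $[qq]+[q\to q]\ge 2[c\to c]$, and reads off $C^{(1)}_{{\rm ea},\eps}(\cN)\ge 2q^{(1)}_{\eps^2}$, the square accounting for the different fidelity criteria in the EAQ and EAC definitions.

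Your closing worry is unwarranted for the purposes of reproducing the paper: the authors do \emph{not} introduce a separate decoupling-style converse for the EAC case. They obtain the additive constant $\log\frac{1}{2\sqrt{2}\,\eps}$ simply by asserting that the composition of Theorem~\ref{THM_EAQC} with Theorem~\ref{one-eaq1} (at $4\eps$) yields it; no sharper or independent argument is given. So you should proceed with the straightforward composition and track the constants as they fall out, rather than attempting a fresh classical-quantum decoupling analysis.
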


\subsection{Acheivability}
\label{Achievability}
The lower bound in \reff{ea_bounds} is obtained by employing the one-shot version of the entanglement-assisted quantum communication (or ``father'') protocol.

From Theorem~\ref{one-eaq1} and Theorem~\ref{THM_EAQC}, it follows that the one-shot $\eps$-error entanglement-assisted quantum communication protocol for a quantum channel $\cN$ that consumes $e^{(1)}_\eps$ ebits of entanglement and transmits $q^{(1)}_\eps$ qubits can be expressed in terms of the following one-shot resource inequality:
\begin{equation}\label{RI_Father-1}
\langle \cN\rangle + e^{(1)}_\eps [qq] \ge_\eps  q^{(1)}_\eps[q\to q].
\end{equation}
Here $[q\rightarrow q]$ represents one qubit of quantum communication from Alice (the sender) to Bob (the receiver); $[qq]$ represents an ebit shared between Alice and Bob, and the notation $\ge_\eps$ is used to emphasize that the error
in achieving the goal of the protocol is at most $\eps$. In the above, $e^{(1)}_\eps$ and $q^{(1)}_\eps$, given below, respectively, follow from \reff{eq_eaet_00} and \reff{eq_eaet_01}:
\begin{align}
e^{(1)}_\eps&= \frac{1}{2}\left[H_{\min}^{\eps'}(A)_\psi+H_{\rm max}^{\eps'}(A|B)_{\psi}\right]  \label{q11}\\
q^{(1)}_\eps &= \frac{1}{2}\left[H_{\min}^{\eps'}(A)_\psi-H_{\rm max}^{\eps'}(A|B)_\psi\right] +2\log\eps'-1,\label{e11}
\end{align}
where $\eps'>0$ is such that $\eps=\sqrt{2\sqrt{27\eps'}+27\eps'}$, and $\psi^{ABE}$ is defined in \reff{ea_state000}.

The resource inequality \reff{RI_Father-1} readily yields a resource inequality for one-shot EAC communication through a noisy quantum channel, which in turn 
can be used to obtain a lower bound on the one-shot EAC capacity. This can be seen as follows. Combining \reff{RI_Father-1} with the resource inequality for superdense coding:
$$[qq]+[q\to q]\geq 2[c\to c],$$
yields the following resource inequality for one-shot EAC communication through the noisy channel $\cN\equiv \cN^{A'\rightarrow B}$: \newline
\begin{align}
\langle\cN\rangle + q^{(1)}_{\eps}[qq] +e^{(1)}_{\eps}[qq] &\geq_\eps q^{(1)}_{\eps}[q\to q]+q^{(1)}_{\eps}[qq]\nonumber \\
&\geq_{\sqrt\eps} 2 q^{(1)}_{\eps}[c\to c]. \nonumber\\
\Rightarrow \langle\cN\rangle+ (q^{(1)}_{\eps}+e^{(1)}_{\eps})[q q]
&\,\,{\geq}_{\sqrt{\eps}} \,\, 2q^{(1)}_{\eps}[c\to c].\label{RI_super2}
\end{align}
Replacing $\eps$ by $\eps^2$ in \reff{RI_super2} directly yields the following lower
bound on the $\eps$-error one-shot EAC capacity\footnote{The necessity of replacing $\eps$ by $\eps^2$ arises from the different fidelity criteria used in defining the one-shott entanglement assisted quantum and classical capacities (see \reff{28} and \reff{pr})}:
\begin{align}
C^{(1)}_{{\rm ea},\eps}(\cN) & \ge   2q^{(1)}_{\eps^2}\nonumber\\
&= \Bigl[H_{\min}^{\eps''}(A)_\psi - H_{\max}^{\eps''}(A|B)_{\psi}\Bigr] +  4 \log {\eps''}-2,
\label{11}
\end{align}
where $\eps''$ is as defined in Theorem~\ref{one-eacap}. Note that \reff{11} reduces to the lower bound in \reff{ea_bounds} when the optimal input state $\phi^{A'}$ is used.

\subsection{Proof of the Converse}

We prove the upper bound in \reff{ea_bounds} by showing that if it did not
hold then one would obtain a contradiction to the upper bound (in \reff{ea_eaq_bounds}) on the one-shot 
entanglement-assisted quantum capacity of a channel.

Let us assume that
\be\label{c1}
C^{(1)}_{{\rm ea},\eps}(\cN)  > \Delta(2\eps, \cN),
\ee
where
\begin{multline}\label{Delta}
\Delta(2\eps, \cN) := \\ \max_{\phi^{A'}} \left[H_{\min}^{4\eps}(A)_\psi - H_{\max}^{8\eps+2\sqrt{\kappa'}}(A|B)_{\psi}\right]+\log\frac{1}{2\sqrt{2}\eps},
\end{multline}
with $\kappa':=2\sqrt{{8\sqrt{\eps}}}$, $\psi^{ABE}$ being given by \reff{ea_state000}, and the maximization being over all possible input states to the channel\footnote{The factor of $2$ in front of $\eps$ arises from the different fidelity criteria used in defining the one-shott entanglement assisted quantum and classical capacities (see \reff{28} and \reff{pr}), and the relation \reff{fidelity} between the fidelity and the trace distance.}. This is equivalent to the assumption that more than $\Delta(2\eps, \cN)$ bits of classical information can be communicated through a single use of $\cN$ with an error $\le \eps$, in the presence of an entanglement resource  in the form of a maximally entangled state.

Now since unlimited entanglement is available for the protocol, we infer that by quantum teleportation more than $\Delta(2\eps, \cN)/2$ qubits can be transmitted over a single use of $\cN$ with an error $\le 2\eps$.  Then from the definition \reff{qeacap} of the one-shot $2\eps$-error entanglement assisted quantum capacity $Q^{(1)}_{\rm{ea},2\eps}(\cN)$ of the channel $\cN$, it follows that $Q^{(1)}_{\rm{ea},2\eps}(\cN) > \Delta(2\eps, \cN)/2$. However, this contradicts the upper bound to $Q^{(1)}_{\rm{ea},2\eps}(\cN)$ as obtained from \reff{eq_EAQC} and \reff{ea_eaq_bounds}.

\section{Entanglement-assisted classical and quantum capacities for multiple uses of a memoryless channel}
\label{SEC_ASYM}
\subsection{Entanglement-assisted classical capacity for multiple uses of a memoryless channel}
\begin{definition}
We define the entanglement-assisted classical capacity in the asymptotic memoryless scenario as follows:
\[
C_{\rm ea}^\infty(\cN) :=\lim_{\eps\to 0}\liminf_{n\to\infty}\frac{1}{n}C_{{\rm ea},\eps}^{(1)}(\cN^{\otimes n})
\]
where $C_{{\rm ea},\eps}^{(1)}(\cN^{\otimes n})$ denotes the one-shot $\eps$-error EAC capacity for $n$ independent uses of the channel $\cN$.
\end{definition}

Next we show how the known achievable rate for EAC communication in the asymptotic, memoryless scenario can be recovered from Theorem~\ref{one-eacap}. We also prove that this rate is indeed optimal \cite{Bennett:2002fw,Holevo:2002cd}.
\begin{theorem}\cite{Bennett:2002fw,Holevo:2002cd}\label{ea_asymp}
The entanglement-assisted classical capacity in the asymptotic memoryless scenario is given by the following:
\begin{equation}\label{eac_arates}
C_{\rm ea}^\infty(\cN) = \max_{\phi^{A'}\in\cD(\cH_{A'})} I(A:B)_\psi
\end{equation}
where the maximization is over all possible input states to the channel $\cN$, $\psi^{ABE}$ is defined in \reff{ea_state000}, and $I(A:B)_\psi$ denotes the mutual information of the state $\psi^{AB}:= \tr_E \psi^{ABE}$.
\end{theorem}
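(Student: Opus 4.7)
The plan is to derive the asymptotic formula from Theorem \ref{one-eacap} by applying it to the $n$-fold product channel $\cN^{\otimes n}$, dividing by $n$, and taking the limits $n\to\infty$ followed by $\eps\to 0$. The essential technical ingredient is the Fully Quantum Asymptotic Equipartition Property (FQAEP) for smoothed entropies, which states that for any state $\rho^{AB}$:
\begin{align*}
\lim_{\eps\to 0}\lim_{n\to\infty}\frac{1}{n}H_{\min}^{\eps}(A^n)_{\rho^{\otimes n}} &= H(A)_\rho,\\
\lim_{\eps\to 0}\lim_{n\to\infty}\frac{1}{n}H_{\max}^{\eps}(A^n|B^n)_{\rho^{\otimes n}} &= H(A|B)_\rho.
\end{align*}

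For the achievability direction ($\geq$), I would apply the lower bound in \reff{ea_bounds} to $\cN^{\otimes n}$, restricting the maximization to product inputs $(\phi^{A'})^{\otimes n}$ so that the channel output state is $\psi^{\otimes n}$. Dividing by $n$ and invoking FQAEP on each smoothed-entropy term yields
\[
\frac{1}{n}C_{{\rm ea},\eps}^{(1)}(\cN^{\otimes n}) \geq H(A)_\psi - H(A|B)_\psi + o(1) = I(A:B)_\psi + o(1),
\]
since the additive terms $4\log\eps'' - 2$ are independent of $n$. Maximizing over $\phi^{A'}$ then gives the lower bound in \reff{eac_arates}.

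For the converse ($\leq$), I would apply the upper bound in \reff{ea_bounds} to $\cN^{\otimes n}$, where the maximization is now over arbitrary (possibly entangled) input states $\phi^{(A')^n}$. The additive $\log\frac{1}{2\sqrt{2}\eps}$ correction vanishes when divided by $n$. The delicate step is to reduce the asymptotic rate of $\frac{1}{n}[H_{\min}^{4\eps}(A^n)_\psi - H_{\max}^{8\eps+2\sqrt{\kappa'}}(A^n|B^n)_\psi]$ for general $\psi^{A^nB^nE^n}$ to a single-letter expression. For this I would use two facts in tandem: first, that in the asymptotic regime this smoothed-entropy difference is bounded above by the quantum mutual information $\frac{1}{n}I(A^n:B^n)_\psi$ (plus vanishing corrections), which follows by relating the smoothed entropies to their von Neumann counterparts via standard FQAEP-type estimates; second, the additivity of the single-letter mutual information for the entanglement-assisted setting proved in \cite{Bennett:2002fw,Holevo:2002cd}, namely that for any input $\phi^{(A')^n}$ producing output $\psi^{A^nB^nE^n}$ via $\cN^{\otimes n}$,
\[
\frac{1}{n} I(A^n:B^n)_\psi \;\leq\; \max_{\phi^{A'}\in\cD(\cH_{A'})} I(A:B)_\psi.
\]
Combining these yields the upper bound in \reff{eac_arates}.

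The main obstacle is the converse, specifically the step of bounding the difference of smoothed entropies on arbitrary (non-i.i.d.) input states by the mutual information. The FQAEP as stated above applies directly only to product states, so a refined argument is required -- typically either by appealing to subadditivity-like properties of the smoothed entropies or, more simply, by invoking the additivity of the single-letter entanglement-assisted capacity formula which is proved independently. Once both the smoothed-entropy bound and the additivity of $I(A:B)$ are in hand, the two limits combine to yield equality in \reff{eac_arates}.
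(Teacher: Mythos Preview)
Your proposal is correct and follows the same route as the paper's proof: apply Theorem~\ref{one-eacap} to $\cN^{\otimes n}$, restrict to i.i.d.\ inputs for achievability and use the AEP (Lemma~\ref{LEM_IID1}), then for the converse pass from smoothed entropies to von Neumann entropies and invoke subadditivity of $I(A^n{:}B^n)$ for product channels.

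The one place where you remain vague and the paper is concrete is precisely the ``main obstacle'' you flag. The FQAEP is \emph{not} what is used to bound the smoothed entropies on non-i.i.d.\ inputs; instead the paper applies the pointwise continuity-type bounds of Lemma~\ref{H_boudn},
\[
H_{\min}^\eps(A|B)_\rho \le H(A|B)_\rho + 8\eps\log|A| + 2h(2\eps),\qquad
H_{\max}^\eps(A|B)_\rho \ge H(A|B)_\rho - 8\eps\log|A| - 2h(2\eps),
\]
which hold for \emph{arbitrary} states. Applied on $A^n$ these give corrections of order $\eps\, n\log|A|$, which survive division by $n$ but then vanish as $\eps\to 0$. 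This converts the smoothed-entropy difference directly into $\tfrac{1}{n}I(A^n{:}B^n)_{\psi_n}$ without any i.i.d.\ assumption. The single-letterization then follows from subadditivity of the channel mutual information for memoryless channels (the paper cites \cite{Adami:1997kk}; this is the same content as the additivity in \cite{Bennett:2002fw,Holevo:2002cd} you mention).
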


\begin{IEEEproof}
First we prove that
\begin{equation}\label{asym_lb}
C_{\rm ea}^\infty(\cN) \geq \max_{\phi^{A'}\in\cD(\cH_{A'})} I(A:B)_\psi.
\end{equation}
From the lower bound in Theorem~\ref{one-eacap}, we have
\begin{multline}
C_{\rm ea}^\infty(\cN) \geq \lim_{\eps\to 0}\liminf_{n\to\infty}\frac{1}{n}\Big(\max   
_{\phi^{A'^n}\in\cD(\cH_{A'}^{\otimes n})}
[H_{\min}^{\eps''}(A^n)_{\psi_n}   \\ - H_{\max}^{\eps''}(A^n|B^n)_{\psi_{n}}]
+ 4 \log {\eps''} -2\Bigr)\nonumber
\end{multline}
where $\psi_n$ is defined as
\begin{equation}\label{eac_out}
{\psi_n} \equiv {\psi}^{A^nB^n}:=(\id^{A_n}\otimes \cN^{\otimes n})
({\phi}^{A^nA'^n}),
\end{equation}
where ${\phi}^{A^nA'^n}$ denotes a purification of the input state
${\phi^{A'^n}}$, and $\eps''$ is as defined in Theorem~\ref{one-eacap}.
By restricting the maximization in the above inequality to the set of input states of the form $(\phi^{A'})^{\otimes n}$
\bea
C_{\rm ea}^\infty(\cN) &\geq & \lim_{\eps\to 0}\liminf_{n\to\infty} \frac{1}{n} \Bigl(\max_{\phi^{A'} \in\cD(\cH_{A'})}[H_{\min}^{\eps''}(A^n)_{\psi^{\otimes n}}\nonumber\\
& & - H_{\max}^{\eps''}(A^n|B^n)_{\psi^{\otimes n}}] + 4 \log \eps'' -2\Bigr)\label{asym_lb2}
\eea
where $\psi$ is defined through \reff{ea_state000}.
Let ${\widehat{\psi}}^{AB}$ be the state such that
\[
I(A:B)_{\widehat{\psi}}=\max_{\phi^{A'}\in\cD(\cH_{A'})} I(A:B)_\psi.
\]
Further restricting to the state $\widehat{\psi}^{AB}$, we can obtain the following from \reff{asym_lb2}
\bea
C_{\rm ea}^\infty(\cN)&\geq &\lim_{\eps\to 0}\liminf_{n\to\infty} \frac{1}{n} \bigl[H_{\min}^{\eps''}(A^n)_{\widehat{\psi}^{\otimes n}} \nonumber\\
& &- H_{\max}^{\eps''}(A^n|B^n)_{\widehat{\psi}^{\otimes n}} + 4 \log \eps'\bigr].
\eea
Then from the superadditivity of the limit inferior and the fact that the limits on the right-hand side of the above equation exist \cite{Tomamichel:2009dd}, we obtain
\begin{multline}
C_{\rm ea}^\infty(\cN)
\geq \lim_{\eps\to 0}\lim_{n\to\infty} \frac{1}{n}H_{\min}^{\eps''}(A^n)_{\widehat{\psi}^{\otimes n}} \\
- \lim_{\eps\to 0}\lim_{n\to\infty} \frac{1}{n} H_{\max}^{\eps''}(A^n|B^n)_{\widehat{\psi}^{\otimes n}}\label{asym_lb3}.
\end{multline}
Finally, by using Lemma~\ref{LEM_IID1}, we obtain the desired bound \reff{asym_lb}:
\begin{align*}
C_{\rm ea}^\infty(\cN) &\geq H(A)_{\widehat{\psi}}-H(A|B)_{\widehat{\psi}}\nonumber \\
&=I(A:B)_{\widehat{\psi}}.
\end{align*}

Next we prove that
\begin{equation}\label{asym_ub}
C_{\rm ea}^\infty(\cN) \leq \max_{\phi^{A'}\in\cD(\cH_{A'})} I(A:B)_\psi.
\end{equation}
From the upper bound in Theorem~\ref{one-eacap}, we have
\begin{multline}
C_{\rm ea}^\infty(\cN) \leq \lim_{\eps\to 0}\liminf_{n\to\infty}\frac{1}{n}
\Bigl(\max_{\phi^{A'^n}\in\cD(\cH_{A'}^{\otimes n})}
[H_{\min}^{4\eps}(A^n)_{\psi_n}    - \\
H_{\max}^{8\eps + 2\sqrt{\kappa'}}(A^n|B^n)_{\psi_{n}}]+\log \frac{1}{2\sqrt{2}\eps}
\Bigr).\nonumber
\end{multline}
Using Lemma~\ref{H_boudn}, we obtain
\bea
C_{\rm ea}^\infty(\cN) &\leq &  \lim_{\eps\to 0}\liminf_{n\to\infty}\frac{1}{n}\Bigl(\max
_{\phi^{A'^n}\in\cD(\cH_{A'}^{\otimes n})}
[H(A^n)_{\psi_n} -\nonumber\\
& &  H(A^n|B^n)_{\psi_{n}}]+ f(\eps, n)\Bigr),
\eea
where
$$
f(\eps, n) :=
\log \frac{1}{2\sqrt{2}\eps}+ 256 \eps\log|A^n|+ 4h(48 \eps).
$$
In the above, we have used the notation $h(\eta)$ to denote the binary entropy
for any $0\le \eta \le 1$. The fact that the limit inferior is upper bounded by the limit superior and the latter is subadditive, and the fact that 
$$
\lim_{\eps\to 0}\limsup_{n\to\infty}\frac{1}{n} f(\eps, n) = 0,$$
imply that
\begin{multline}
C_{\rm ea}^\infty(\cN) \leq  \lim_{\eps\to 0}\limsup_{n\to\infty}\frac{1}{n} \\
\left(\max_{\phi^{A'^n}\in\cD(\cH_{A'}^{\otimes n})}
[H(A^n)_{\psi_n}   - H(A^n|B^n)_{\psi_{n}}]\right).\label{asym_ub01}
\end{multline}
The above equation reduces to
\begin{equation}
C_{\rm ea}^\infty(\cN) \leq\lim_{n\to\infty}\frac{1}{n}\left(\max_{\phi^{A'^n}\in\cD(\cH_{A'}^{\otimes n})}[I(A^n:B^n)_{\psi_{n}}]\right),\nonumber
\end{equation}
since the limit in \reff{asym_ub01} exists, and $H(A^n)_{\psi_n}-H(A^n|B^n)_{\psi_n}=I(A^n:B^n)_{\psi_n}$. Using eq.(3.24) of \cite{Adami:1997kk} we infer that
$I(A^n:B^n)_{\psi_n}$ is subadditive since the channel from whose action the
state $\psi^n$ (defined by \reff{eac_out}), results is memoryless (i.e., $\cN^{\otimes n}$). This yields
\begin{align}
C_{\rm ea}^\infty(\cN)\leq \max_{\phi^{A'}\in\cD(\cH_{A'})} I(A:B)_\psi.
\end{align}

\end{IEEEproof}

\subsection{Entanglement-assisted quantum capacity for multiple uses of a memoryless channel}
\begin{definition}
We define the entanglement-assisted quantum capacity of a quantum channel $\cN$ in the asymptotic memoryless scenario as follows:
\[
Q_{\rm ea}^\infty(\cN) :=\lim_{\eps\to 0}\liminf_{n\to\infty}\frac{1}{n}Q_{{\rm ea},\eps}^{(1)}(\cN^{\otimes n})
\]
where $Q_{{\rm ea},\eps}^{(1)}(\cN^{\otimes n})$ denotes the one-shot $\eps$-error entanglement-assisted quantum capacity for $n$ independent uses of the channel $\cN$.
\end{definition}

\begin{theorem}\cite{Devetak:2008eb}\label{eaq_asymp}
The entanglement-assisted quantum capacity of a quantum channel $\cN$ in the 
asymptotic, memoryless scenario is given by
\begin{equation}\label{eaq_arates}
Q_{\rm ea}^\infty(\cN) = \max_{\phi^{A'}\in\cD(\cH_{A'})} \frac{1}{2}I(A:B)_\psi
\end{equation}
where ${\psi}^{ABE}$ is defined in \reff{ea_state000}, and the maximisation
is over all possible input states to the channel .
\end{theorem}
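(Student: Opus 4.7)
The plan is to mirror exactly the argument used in the proof of Theorem~\ref{ea_asymp}, but starting from the bounds of Theorem~\ref{one-eaq1} instead of Theorem~\ref{one-eacap}. The key observation is that the one-shot bounds on $E^{(1)}_{{\rm ea},\eps}(\cN)$ are (up to additive error terms) exactly \emph{half} of the corresponding bounds on $C^{(1)}_{{\rm ea},\eps}(\cN)$, and via Theorem~\ref{THM_EAQC} the same is essentially true for $Q^{(1)}_{\rm ea,\eps}(\cN)$. Hence asymptotic regularization will produce $\tfrac12 I(A:B)_\psi$.

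For the achievability direction, I would apply Theorem~\ref{THM_EAQC} and the lower bound in \reff{ea_eaq_bounds} to $\cN^{\otimes n}$, then divide by $n$ and restrict the maximization to product inputs $\phi^{A'^n}=(\phi^{A'})^{\otimes n}$, which makes the output state the tensor power $\psi^{\otimes n}$. Let ${\widehat{\psi}}^{AB}$ be the state achieving $\max_\phi I(A:B)_\psi$; further restricting to this state and applying the fully quantum asymptotic equipartition property (Lemma~\ref{LEM_IID1}) yields
\[
\lim_{\eps\to 0}\lim_{n\to\infty}\frac{1}{n}H_{\min}^{\eps'}(A^n)_{\widehat\psi^{\otimes n}}=H(A)_{\widehat\psi},
\]
\[
\lim_{\eps\to 0}\lim_{n\to\infty}\frac{1}{n}H_{\max}^{\eps'}(A^n|B^n)_{\widehat\psi^{\otimes n}}=H(A|B)_{\widehat\psi}.
\]
The additive correction $2\log\eps'-1$ in $q^{(1)}_\eps$ and the $-1$ coming from Theorem~\ref{THM_EAQC} both vanish after dividing by $n$, so I obtain the lower bound $Q^\infty_{\rm ea}(\cN)\ge\tfrac12 I(A:B)_{\widehat\psi}=\tfrac12\max_\phi I(A:B)_\psi$.

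For the converse, I would combine the upper bound in \reff{ea_eaq_bounds} with Theorem~\ref{THM_EAQC} applied to $\cN^{\otimes n}$, getting
\[
Q^{(1)}_{{\rm ea},\eps}(\cN^{\otimes n})\le \tfrac12\!\max_{\phi^{A'^n}}\!\bigl[H_{\min}^{4\eps}(A^n)_{\psi_n}-H_{\max}^{8\eps+2\sqrt\kappa}(A^n|B^n)_{\psi_n}\bigr]+O(\log\tfrac{1}{\eps}),
\]
with $\psi_n$ the output of $\id\otimes\cN^{\otimes n}$ on a purification of $\phi^{A'^n}$. I would then apply Lemma~\ref{H_boudn} to replace the smoothed entropies with their unsmoothed counterparts at a cost of $O(\eps\log|A^n|+h(\eps))$, turning the right-hand side into $\tfrac12\max_{\phi^{A'^n}} I(A^n:B^n)_{\psi_n}+f(\eps,n)$ with $\tfrac{1}{n}f(\eps,n)\to 0$ as $\eps\to 0$ (after $n\to\infty$). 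Dividing by $n$ and taking $\liminf$ gives
\[
Q^\infty_{\rm ea}(\cN)\le \lim_{n\to\infty}\frac{1}{n}\max_{\phi^{A'^n}} \tfrac12 I(A^n:B^n)_{\psi_n},
\]
and subadditivity of the mutual information for memoryless channels (eq.~(3.24) of \cite{Adami:1997kk}) collapses the regularization to the single-letter expression $\tfrac12\max_{\phi^{A'}} I(A:B)_\psi$.

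The main technical obstacle is the same one encountered in Theorem~\ref{ea_asymp}: controlling the error term $\eps\log|A^n|$, which grows linearly in $n$ but carries a prefactor that can be made arbitrarily small by sending $\eps\to 0$ \emph{after} $n\to\infty$. Taking the limits in the correct order, together with the fact that $\liminf\le\limsup$ and that limsup is subadditive while the mutual information limit actually exists, is what allows the single-letterization to go through; everything else is a direct transcription of the EAC proof with an extra factor of $\tfrac12$ inherited from Theorem~\ref{one-eaq1}.
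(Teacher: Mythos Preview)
Your proposal is correct and is precisely the approach the paper intends: the paper's own proof consists of the single sentence ``The proof is exactly analogous to the proof of Theorem~\ref{ea_asymp},'' and you have correctly spelled out that analogy, using Theorem~\ref{one-eaq1} together with Theorem~\ref{THM_EAQC} in place of Theorem~\ref{one-eacap}, and carrying the factor $\tfrac12$ through. The minor discrepancies in the exact smoothing constants (e.g.\ $4\eps$ versus $2\eps$) are immaterial since all such terms vanish in the order of limits $\eps\to 0$ after $n\to\infty$.
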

\begin{IEEEproof}
The proof is exactly analogous to the proof of Theorem~\ref{ea_asymp}.
\end{IEEEproof}

\section{Conclusions}\label{SEC_CON}
We established upper and lower bounds on the one-shot entanglement-assisted quantum and classical capacities of a quantum channel, and proved that these
bounds converge to the known single-letter formulas for the corresponding
capacities in the asymptotic, memoryless scenario. The bounds in the one-shot case are given in terms of the difference of 
two smoothed entropic quantities. This quantity serves as a 
one-shot analogue of mutual information, since it reduces to the mutual information between the output of a channel and a system purifying its input in the
asymptotic, memoryless scenario.  Note that it is similar in form to the expression characterizing the one-shot capacity of a c-q channel as obtained in \cite{Renes:2010vk}.

There are some other quantities in the existing literature on 
one-shot quantum information theory which could also be considered
to be one-shot analogues of mutual information, namely, the quantity 
characterizing the classical capacity of a c-q channel \cite{Wang:2010ux} which
is defined in terms of the relative R\'enyi entropy of order zero, and a 
quantity characterizing the quantum communication cost of a one-shot quantum state splitting protocol \cite{Berta:2009vy}, which is defined in terms of the 
max-relative entropy. It would be interesting to investigate how the 
quantity arising in this paper and these 
different one-shot analogues of the mutual information are related to each other.

\section*{Acknowledgments}
The authors are very grateful to William Matthews for pointing out an error in 
an earlier version of this paper and for carefully reading this version. 
They would also like to thank Frederic Dupuis for helpful discussions. The research leading to these results has received funding from the European Community's Seventh Framework Programme (FP7/2007-2013) under grant agreement number 213681.

\appendices
\section{Useful Lemmas}\label{APPENDIX_A}

We make use of the following properties of the min- and max-entropies which were proved in \cite{Tomamichel:2010ff, Berta:2010ud}:
\begin{lemma}
\label{mincond}
Let $0<\eps \leq 1$, $\rho^{AB} \in \cD_{\leq} (\cH_{AB})$, and let $U^{A\rightarrow C}$ and $V^{B\rightarrow D}$ be two isometries with $\omega^{CD}:= (U \otimes V) \rho^{AB}(U^\dagger \otimes V^\dagger)$, then
\begin{eqnarray*}
H_{\min}^\eps(A|B)_\rho&=&H_{\min}^\eps(C|D)_\omega\\
H_{\max}^\eps(A|B)_\rho&=&H_{\max}^\eps(C|D)_\omega.
\end{eqnarray*}
\end{lemma}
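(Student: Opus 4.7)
The plan is to first prove the unsmoothed analogue $H_{\min}(A|B)_\rho=H_{\min}(C|D)_\omega$, deduce the smoothed min-entropy statement from it by a support-restriction argument on the optimiser in the $\eps$-ball, and finally obtain the max-entropy identity from the smoothed min-entropy identity via the duality relation~\reff{EQ_dual} applied to a purification. The main tools throughout are the isometric invariance of the max-relative entropy, $D_{\max}(W\rho W^\dagger\|W\sigma W^\dagger)=D_{\max}(\rho\|\sigma)$ for any isometry $W$, together with the isometric invariance of the fidelity, $F(W\rho W^\dagger,W\sigma W^\dagger)=F(\rho,\sigma)$.

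For the unsmoothed case, let $P:=UU^\dagger$ and $Q:=VV^\dagger$ denote the projectors onto the images of $U$ and $V$, so that $\omega^{CD}$ is supported in the range of $P\otimes Q$. For any $\sigma^B\in\cD(\cH_B)$, the operator $V\sigma^B V^\dagger$ is a valid state on $D$ since $V^\dagger V=\bbI_B$, and conjugation by $U\otimes V$ together with sandwiching by $P\otimes\bbI_D$ (which uses the support condition on $\omega$) shows that $\rho\le\lambda(\bbI_A\otimes\sigma^B)$ is equivalent to $\omega\le\lambda(\bbI_C\otimes V\sigma^B V^\dagger)$; this already yields $H_{\min}(C|D)_\omega\ge H_{\min}(A|B)_\rho$. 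For the reverse inequality, given $\tau^D\in\cD(\cH_D)$ with $\omega\le\lambda(\bbI_C\otimes\tau^D)$, I would sandwich by $\bbI_C\otimes Q$, rewrite $Q\tau^D Q=V(V^\dagger\tau^D V)V^\dagger$, set $\mu:=\tr(V^\dagger\tau^D V)\le 1$ and $\sigma^B:=V^\dagger\tau^D V/\mu$, and pull back by $U^\dagger\otimes V^\dagger$ to deduce $\rho\le\lambda\mu(\bbI_A\otimes\sigma^B)$; since $\mu\le 1$ this gives the required inequality $D_{\max}(\rho\|\bbI_A\otimes\sigma^B)\le D_{\max}(\omega\|\bbI_C\otimes\tau^D)$.

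The step I expect to be the main obstacle is the converse in the smoothed version, because a generic $\bar\omega\in\cB^\eps(\omega)$ is not supported in the range of $P\otimes Q$ and the naive pull-back $(U^\dagger\otimes V^\dagger)\bar\omega(U\otimes V)$ discards the off-support blocks. The easy direction is that $\bar\rho\mapsto(U\otimes V)\bar\rho(U\otimes V)^\dagger$ sends $\cB^\eps(\rho)$ into $\cB^\eps(\omega)$ by isometric invariance of the fidelity, yielding $H_{\min}^\eps(C|D)_\omega\ge H_{\min}^\eps(A|B)_\rho$. For the hard direction, the strategy is to show that the supremum defining $H_{\min}^\eps(C|D)_\omega$ may be restricted to $\bar\omega$ supported in the range of $P\otimes Q$: setting $\bar\omega':=(P\otimes Q)\bar\omega(P\otimes Q)$, the identity $\sqrt\omega=(P\otimes Q)\sqrt\omega(P\otimes Q)$ combined with the formula $F(X,Y)=\tr\sqrt{\sqrt{Y}X\sqrt{Y}}$ gives $F(\bar\omega',\omega)=F(\bar\omega,\omega)$, so $\bar\omega'\in\cB^\eps(\omega)$; and the same sandwiching argument used in the unsmoothed case shows $H_{\min}(C|D)_{\bar\omega'}\ge H_{\min}(C|D)_{\bar\omega}$. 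Once the optimiser may be assumed supported in the range of $P\otimes Q$, setting $\bar\rho:=(U^\dagger\otimes V^\dagger)\bar\omega(U\otimes V)$ gives $(U\otimes V)\bar\rho(U\otimes V)^\dagger=\bar\omega$, so the unsmoothed invariance yields $H_{\min}(A|B)_{\bar\rho}=H_{\min}(C|D)_{\bar\omega}$, while isometric invariance of fidelity gives $F(\bar\rho,\rho)=F(\bar\omega,\omega)\ge\sqrt{1-\eps^2}$, placing $\bar\rho\in\cB^\eps(\rho)$.

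Finally, for the max-entropy identity I would invoke the duality relation of Definition~\ref{LEM_DUALITY}, $H_{\max}^\eps(A|B)_\rho=-H_{\min}^\eps(A|E)_\psi$, for any purification $\psi^{ABE}$ of $\rho^{AB}$. The state $\psi'^{CDE}:=(U\otimes V\otimes\bbI_E)\psi(U\otimes V\otimes\bbI_E)^\dagger$ purifies $\omega^{CD}$, and the identity $\tr_D[(\bbI_C\otimes V)Z(\bbI_C\otimes V^\dagger)]=\tr_B Z$ (a consequence of $V^\dagger V=\bbI_B$) yields $\psi'^{CE}=(U\otimes\bbI_E)\psi^{AE}(U^\dagger\otimes\bbI_E)$. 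Applying the smoothed min-entropy invariance already established, now with isometries $U$ on $A$ and $\bbI_E$ on $E$, gives $H_{\min}^\eps(C|E)_{\psi'}=H_{\min}^\eps(A|E)_\psi$, and hence $H_{\max}^\eps(C|D)_\omega=H_{\max}^\eps(A|B)_\rho$.
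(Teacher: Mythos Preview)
Your argument is correct. The paper does not actually prove Lemma~\ref{mincond}; it is merely quoted from \cite{Tomamichel:2010ff,Berta:2010ud} in Appendix~\ref{APPENDIX_A}, so there is no in-paper proof to compare against. Your approach---proving the unsmoothed identity via isometric invariance of $D_{\max}$, then handling the smoothed version by projecting an arbitrary $\bar\omega\in\cB^\eps(\omega)$ onto the range of $P\otimes Q$ (which preserves the fidelity with $\omega$ and can only increase $H_{\min}$), and finally deducing the max-entropy statement from duality applied to the purification $(U\otimes V\otimes\bbI_E)\ket{\psi}$---is essentially the standard proof found in those references.

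Two small technical remarks. First, in the reverse direction of the unsmoothed and smoothed arguments you implicitly assume $\mu=\tr(V^\dagger\tau^D V)>0$; the degenerate case $\mu=0$ forces $\bar\omega'=0$ (respectively $\omega=0$), in which case the min-entropy is $+\infty$ and the inequality holds trivially. Second, the duality relation in Definition~\ref{LEM_DUALITY} is stated in the paper only for normalised $\rho^{AB}\in\cD(\cH_{AB})$, whereas Lemma~\ref{mincond} allows $\rho^{AB}\in\cD_\le(\cH_{AB})$; strictly speaking your Step~3 therefore covers only the normalised case within the paper's own definitions. The extension to subnormalised states holds as well (it is proved in \cite{Tomamichel:2010ff}), but if you want to stay self-contained you could instead argue directly from the variational formula~\reff{star} for $H_{\max}$, using isometric invariance of the fidelity in place of isometric invariance of $D_{\max}$, and then repeat your $\eps$-ball projection argument verbatim.
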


\begin{lemma}[Chain rule for smoothed min-entropy]\label{chain}\cite{Berta:2010ud}
Let $\eps>0$, $\eps',\eps''\geq0$ and $\rho^{ABC}\in\cD(\cH_{ABC})$. Then
\[
H_{\rm min}^{\eps+2\eps'+\eps''}(AB|C)_\rho\geq H^{\eps'}_{\rm min}(A|BC)_\rho+H_{\rm min}^{\eps''}(B|C)_\rho-\log\frac{2}{\eps^2}.
\]
\end{lemma}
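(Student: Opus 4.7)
The plan is to prove the chain rule by combining the operator inequalities that characterize the two smoothed min-entropies on the right-hand side, using a carefully chosen projector whose gentle-measurement-type penalty accounts for the $\log(2/\eps^2)$ correction in the exponent.

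First, I unpack the variational definition. Set $\alpha_1:=H_{\min}^{\eps'}(A|BC)_\rho$ and $\alpha_2:=H_{\min}^{\eps''}(B|C)_\rho$. There exist subnormalized states $\rho_1^{ABC}\in\cB^{\eps'}(\rho^{ABC})$ and $\rho_2^{BC}\in\cB^{\eps''}(\rho^{BC})$, together with normalized states $\sigma^{BC}$ and $\tau^C$, such that
\[
\rho_1^{ABC}\leq 2^{-\alpha_1}\,\bbI_A\otimes\sigma^{BC},\qquad \rho_2^{BC}\leq 2^{-\alpha_2}\,\bbI_B\otimes\tau^C.
\]
By Uhlmann's theorem I lift $\rho_2^{BC}$ to some $\rho_2^{ABC}\in\cB^{\eps''}(\rho^{ABC})$ whose $BC$-marginal is the original $\rho_2^{BC}$.

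Second, the main technical step is to convert $\sigma^{BC}$ on the right-hand side of the first inequality into something of the form $\bbI_B\otimes\tau^C$ so that the two bounds can be composed. For this I introduce the spectral projector $P^{BC}$ of the operator $\sigma^{BC}-(2/\eps^2)\,\bbI_B\otimes\tau^C$ onto its non-positive part; by construction $P\sigma^{BC}P\leq (2/\eps^2)\,\bbI_B\otimes\tau^C$. A Markov-type estimate applied to the complementary projector $\bbI-P$, using $\rho_2^{BC}\leq 2^{-\alpha_2}\,\bbI_B\otimes\tau^C$, shows that the weight of $\rho^{ABC}$ outside the support of $\bbI_A\otimes P$ is at most $\eps^2/2$, which by the fidelity-trace-distance inequality \reff{fidelity} translates to purified distance at most $\eps$. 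Setting $\hat\rho^{ABC}:=(\bbI_A\otimes P)\,\rho_1^{ABC}\,(\bbI_A\otimes P)$ then yields
\[
\hat\rho^{ABC}\leq 2^{-\alpha_1}\,\bbI_A\otimes P\sigma^{BC}P\leq 2^{-\alpha_1+\log(2/\eps^2)}\,\bbI_{AB}\otimes\tau^C,
\]
which by the defining inequality of $H_{\min}(AB|C)_{\hat\rho}$ gives $H_{\min}(AB|C)_{\hat\rho}\ge \alpha_1+\alpha_2-\log(2/\eps^2)$.

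Third, it remains to check that $\hat\rho^{ABC}$ lies in the $(\eps+2\eps'+\eps'')$-ball of $\rho^{ABC}$. The triangle inequality for the purified distance decomposes this into three contributions: $\eps'$ from $\rho_1^{ABC}\approx\rho^{ABC}$ before projection, another $\eps'$ arising because the projector sandwich $(\bbI_A\otimes P)\cdot(\bbI_A\otimes P)$ acts on $\rho_1^{ABC}$ rather than on $\rho^{ABC}$ directly (hence the factor $2\eps'$), $\eps''$ from the gentle-measurement estimate which invokes $\rho_2^{BC}\approx\rho^{BC}$, and $\eps$ from the Markov step itself.

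The main obstacle is precisely this distance bookkeeping in the purified metric: one must verify that the Markov estimate contributes genuinely $\eps$ (not $\sqrt{\eps}$) to the smoothing parameter, which requires applying it to a squared-fidelity quantity and carefully invoking the duality between trace distance and fidelity. Producing exactly the coefficients $\eps$, $2\eps'$, $\eps''$ in the smoothing parameter and $\log(2/\eps^2)$ in the exponent requires that the threshold in the definition of $P$ be chosen as $2/\eps^2$; any other choice would distort one of these constants. This balancing of the gentle-measurement penalty against the threshold in the Markov step is the heart of the argument.
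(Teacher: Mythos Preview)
The paper does not supply its own proof of this lemma; it is quoted from \cite{Berta:2010ud}. Your overall strategy---compose the two defining operator inequalities via a projector that truncates $\sigma^{BC}$ relative to $\bbI_B\otimes\tau^C$, and control the resulting disturbance by a Markov/gentle-measurement argument---is indeed the standard one used in that reference.

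However, there is a genuine gap in your execution. With your threshold $2/\eps^2$, the displayed chain
\[
\hat\rho^{ABC}\leq 2^{-\alpha_1}\,\bbI_A\otimes P\sigma^{BC}P\leq 2^{-\alpha_1+\log(2/\eps^2)}\,\bbI_{AB}\otimes\tau^C
\]
only yields $H_{\min}(AB|C)_{\hat\rho}\ge \alpha_1-\log(2/\eps^2)$; the term $\alpha_2$ has vanished and the conclusion you then assert does not follow. The second operator inequality $\rho_2^{BC}\leq 2^{-\alpha_2}\bbI_B\otimes\tau^C$ has been used only in the Markov estimate, not in the operator bound for $\hat\rho$. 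The fix is to set the threshold to $c:=2^{-\alpha_2}\cdot(2/\eps^2)$: then $P\sigma^{BC}P\leq c\,\bbI_B\otimes\tau^C$ (modulo the point below) feeds the missing $\alpha_2$ into the exponent, while the Markov bound becomes $\tr[(\bbI-P)(\bbI_B\otimes\tau^C)]\le 1/c=2^{\alpha_2}\eps^2/2$, which combined with $\rho_2^{BC}\leq 2^{-\alpha_2}\bbI_B\otimes\tau^C$ still gives $\tr[(\bbI-P)\rho_2^{BC}]\le \eps^2/2$. Your closing remark that ``the threshold in the definition of $P$ be chosen as $2/\eps^2$; any other choice would distort one of these constants'' is therefore exactly backwards.

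A secondary technical point: from $P(\sigma^{BC}-c\,\bbI_B\otimes\tau^C)P\le 0$ you obtain $P\sigma^{BC}P\le c\,P(\bbI_B\otimes\tau^C)P$, but $P(\bbI_B\otimes\tau^C)P\le \bbI_B\otimes\tau^C$ fails in general when $P$ and $\bbI_B\otimes\tau^C$ do not commute, so the passage to an upper bound of the form $\bbI_{AB}\otimes\tau^C$ needs more care (e.g.\ by modifying the conditioning state on $C$, or by constructing the truncation operator slightly differently, as is done in the cited reference).
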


\begin{lemma}\label{LEM_Dec} Let $\eps>0$ and $\delta\geq0$. Then
\begin{equation}
H_{\min}^{\delta+{\eps}}(A|B)_\rho \geq H_{\min}^\delta(A)_\sigma
\end{equation}
whenever $\rho^{AB}\in\cB^{\eps}(\sigma^A\otimes\varrho^B)$ for some $\sigma^A\in\cD(\cH_A)$ and $\varrho^B\in\cD(\cH_B)$.
\end{lemma}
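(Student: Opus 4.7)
The plan is to exhibit an explicit state in the $(\delta+\eps)$-ball of $\rho^{AB}$ whose (unsmoothed) conditional min-entropy already matches $H_{\min}^\delta(A)_\sigma$. The natural candidate is $\bar{\sigma}^A \otimes \varrho^B$, where $\bar{\sigma}^A \in \cB^\delta(\sigma^A)$ is the state attaining the smoothed min-entropy of $\sigma^A$, i.e.\ $H_{\min}(A)_{\bar{\sigma}} = H_{\min}^\delta(A)_\sigma$.

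First I would verify that this candidate lies in the correct $\eps$-ball. By the multiplicativity of the fidelity under tensor products,
\[
F\bigl(\bar{\sigma}^A \otimes \varrho^B,\; \sigma^A \otimes \varrho^B\bigr) = F(\bar{\sigma}^A, \sigma^A) \cdot F(\varrho^B, \varrho^B) = F(\bar{\sigma}^A, \sigma^A),
\]
so $\bar{\sigma}^A \otimes \varrho^B \in \cB^\delta(\sigma^A \otimes \varrho^B)$. Next I would invoke the triangle inequality for the purified distance $P(\rho,\tau) := \sqrt{1 - F^2(\rho,\tau)}$: combining $P(\rho^{AB}, \sigma^A \otimes \varrho^B) \leq \eps$ (the hypothesis) with $P(\bar{\sigma}^A \otimes \varrho^B, \sigma^A \otimes \varrho^B) \leq \delta$ yields $P(\rho^{AB}, \bar{\sigma}^A \otimes \varrho^B) \leq \delta + \eps$, i.e.\ $\bar{\sigma}^A \otimes \varrho^B \in \cB^{\delta+\eps}(\rho^{AB})$.

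The final step is to evaluate the conditional min-entropy on the product state. Using the definition and picking the feasible choice $\sigma'^B = \varrho^B$ in the inner maximization,
\[
H_{\min}(A|B)_{\bar{\sigma}\otimes\varrho} \geq -D_{\max}\bigl(\bar{\sigma}^A\otimes\varrho^B \,\big\|\, \bbI_A \otimes \varrho^B\bigr) = -D_{\max}(\bar{\sigma}^A \| \bbI_A) = H_{\min}(A)_{\bar{\sigma}}.
\]
Combining this with the definition of the smoothed quantity, $H_{\min}^{\delta+\eps}(A|B)_\rho \geq H_{\min}(A|B)_{\bar{\sigma}\otimes\varrho} \geq H_{\min}(A)_{\bar{\sigma}} = H_{\min}^\delta(A)_\sigma$, which is the desired inequality.

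The only subtle point is the triangle-inequality step, which requires the $\eps$-ball to be metric-like; this is clean if one works with the purified distance $P$ (equivalent to the fidelity-based $\eps$-ball used here) rather than the trace distance. Everything else is direct from the definitions and the multiplicativity of fidelity, so I expect no serious obstacle beyond being careful that $\bar{\sigma}^A$ may be subnormalized (which is allowed by the definition of $\cB^\delta$ and does not affect the argument, since the fidelity and $D_{\max}$ extend naturally to subnormalized operators).
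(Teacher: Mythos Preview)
Your proof is correct and follows essentially the same approach as the paper's: both arguments place $\bar{\sigma}^A\otimes\varrho^B$ inside $\cB^{\delta+\eps}(\rho^{AB})$ via the triangle inequality for the purified distance, and then reduce $H_{\min}(A|B)_{\bar{\sigma}\otimes\varrho}$ to $H_{\min}(A)_{\bar{\sigma}}$ by choosing $\varrho^B$ in the $D_{\max}$ optimization. The only cosmetic difference is that you fix a specific optimizer $\bar{\sigma}^A$ at the outset, whereas the paper carries the maximization over $\cB^\delta(\sigma^A)$ through the chain of inequalities.
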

\begin{proof}
For $\rho \in {\cD}({\cH})$ and $\sigma \in {\cD}_{\leq}({\cH})$, define
\be\label{crho}
C(\rho, \sigma) := \sqrt{ 1 - F^2(\rho, \sigma)}.
\ee
This quantity $C(\rho, \sigma)$ was introduced in \cite{Gilchrist} and proved to be a metric. It is monotonic under any CPTP map $\cE$, i.e.,
\be
C(\rho, \sigma) \ge C(\cE(\rho), \cE(\sigma)).
\label{mono}
\ee
Moreover, if $\rho,\sigma\in\cD(\cH)$, then
\be\label{CC}
C(\rho, \sigma) \le \sqrt{\N{\rho-\sigma}_1}.
\ee
This follows by noting that $C(\rho, \sigma)$ is a special case of the purified distance $P(\rho, \sigma)$ (introduced in \cite{Tomamichel:2010ff}), which satisfies these properties.

For any $\overline{\sigma}^A\in\cB^\delta(\sigma^A)$, we have
\begin{align}
C(\overline{\sigma}^A\otimes\varrho^B,\rho^{AB})&\leq C(\overline{\sigma}^A\otimes\varrho^B,{\sigma}^A\otimes\varrho^B)+C({\sigma}^A\otimes\varrho^B,\rho^{AB})  \nonumber \\
&\leq \delta+\eps
\end{align}
In other words,  $\forall \, \overline{\sigma}^{A}\in \cB^{\delta}(\sigma^{A})$, the following is true:
\[
\overline{\sigma}^{A}\otimes\varrho^B \in \cB^{\delta+\eps}(\rho^{AB}).
\]
We then have
\begin{eqnarray}
&&H_{\rm min}^{\delta +\eps}(A|B)_\rho \nonumber \\
&=& \max_{\overline{\rho}^{AB}\in \cB^{\delta+\eps}(\rho^{AB})} H_{\rm min}(A|B)_{\overline{\rho}}\nonumber \\
&\geq& \max_{\overline{\sigma}^{A}\in \cB^{\delta}(\sigma^{A})} H_{\rm min}(A|B)_{\overline{\sigma}^{A}\otimes\varrho^B}\nonumber \\
&=&\max_{\overline{\sigma}^{A}\in \cB^{\delta}(\sigma^{A})}
\max_{\varphi^B\in\cD(\cH_B)}\left[
-D_{\max}(\overline{\sigma}^{A}\otimes\varrho^B ||\bbI_{A} \otimes \varphi^B)\right]\nonumber\\
&\geq&\max_{\overline{\sigma}^{A}\in \cB^{\delta}(\sigma^{A})}
\left[
-D_{\max}(\overline{\sigma}^{A}\otimes\varrho^B ||\bbI_{A} \otimes
\varrho^B)\right]\nonumber\\
&=&\max_{\overline{\sigma}^{A}\in \cB^{\delta}(\sigma^{A})}
H_{\rm min}(A)_{\overline{\sigma}^{A}} \nonumber \\
&=& H^{\delta}_{\rm min}(A)_{\sigma}.
\end{eqnarray}
The first inequality follows because $\overline{\sigma}^{A}\otimes\varrho^B \in \cB^{\delta+\eps}(\rho^{AB})$ for any $\overline{\sigma}^{A}\in \cB^{\delta}(\sigma^{A})$. The second inequality follows because we choose a particular $\varphi^B=\varrho^B$.

\end{proof}

\begin{lemma}\cite{Tomamichel:2010ff}\label{LEM_COND}
Let $0<\eps\leq 1$ and $\rho^{ABC}\in\cD_{\leq}(\cH_{ABC})$. Then
\begin{eqnarray*}
H_{\rm min}^\eps(A|BC)_\rho &\leq& H_{\rm min}^\eps(A|B)_\rho \\
H_{\rm max}^\eps(A|BC)_\rho &\leq& H_{\rm max}^\eps(A|B)_\rho.
\end{eqnarray*}
\end{lemma}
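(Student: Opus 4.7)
The plan is to first establish the unsmoothed monotonicity $H_{\min}(A|BC)_\rho \le H_{\min}(A|B)_\rho$ as a data-processing statement, then lift it to the smoothed version by a simple partial-trace argument on the ball $\cB^\eps$, and finally derive the max-entropy inequality via the duality relation in Definition~\ref{LEM_DUALITY}.

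For the first step, recall that by definition
\[
H_{\min}(A|BC)_\rho \;=\; \max_{\sigma^{BC}\in\cD(\cH_{BC})}\bigl[-D_{\max}(\rho^{ABC}\,\|\,\bbI_A\otimes\sigma^{BC})\bigr].
\]
The key ingredient is the monotonicity of $D_{\max}$ under the CPTP map $\tr_C$: for any candidate $\sigma^{BC}$ with marginal $\sigma^{B}:=\tr_C\sigma^{BC}$,
\[
D_{\max}\bigl(\rho^{AB}\,\|\,\bbI_A\otimes\sigma^{B}\bigr) \;\le\; D_{\max}\bigl(\rho^{ABC}\,\|\,\bbI_A\otimes\sigma^{BC}\bigr),
\]
because $\bbI_A\otimes\sigma^B = \tr_C(\bbI_A\otimes\sigma^{BC})$. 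Taking negatives and using that the maximum over $\sigma^{B}$ in the definition of $H_{\min}(A|B)_\rho$ includes the particular choice $\sigma^{B}=\tr_C\sigma^{BC}$, we obtain $-D_{\max}(\rho^{ABC}\|\bbI_A\otimes\sigma^{BC})\le H_{\min}(A|B)_\rho$ for every $\sigma^{BC}$; taking the max over $\sigma^{BC}$ on the left yields $H_{\min}(A|BC)_\rho\le H_{\min}(A|B)_\rho$.

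Next, I promote this to the smoothed version. Let $\overline{\rho}^{ABC}\in\cB^\eps(\rho^{ABC})$ attain the maximum in $H_{\min}^\eps(A|BC)_\rho$, and set $\overline{\rho}^{AB}:=\tr_C\overline{\rho}^{ABC}$. Since the fidelity is monotone non-decreasing under the partial trace (a CPTP map), $F(\overline{\rho}^{AB},\rho^{AB})\ge F(\overline{\rho}^{ABC},\rho^{ABC})$, so $\overline{\rho}^{AB}\in\cB^\eps(\rho^{AB})$. Applying the unsmoothed inequality to $\overline{\rho}^{ABC}$,
\[
H_{\min}^\eps(A|BC)_\rho \;=\; H_{\min}(A|BC)_{\overline{\rho}} \;\le\; H_{\min}(A|B)_{\overline{\rho}} \;\le\; H_{\min}^\eps(A|B)_\rho,
\]
where the last inequality uses that $\overline{\rho}^{AB}$ is a feasible candidate in the optimization defining $H_{\min}^\eps(A|B)_\rho$.

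For the max-entropy statement, I invoke duality. Let $\rho^{ABCD}$ be any purification of $\rho^{ABC}$. Then by Definition~\ref{LEM_DUALITY},
\[
H_{\max}^\eps(A|BC)_\rho \;=\; -H_{\min}^\eps(A|D)_\rho, \qquad H_{\max}^\eps(A|B)_\rho \;=\; -H_{\min}^\eps(A|CD)_\rho,
\]
since the complementary system to $BC$ in the purification is $D$, and to $B$ is $CD$. Applying the min-entropy monotonicity already proved, with the roles of the conditioning system played by $CD$ (discarding $C$ to obtain $D$), gives $H_{\min}^\eps(A|CD)_\rho \le H_{\min}^\eps(A|D)_\rho$. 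Negating this inequality yields $H_{\max}^\eps(A|BC)_\rho \le H_{\max}^\eps(A|B)_\rho$, completing the lemma.

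The only subtlety I foresee is checking that the fidelity-based $\eps$-ball is preserved under partial trace, which is immediate from monotonicity of fidelity under CPTP maps, and that Definition~\ref{LEM_DUALITY} applies uniformly for any purification (so that the complementary-system identifications above are unambiguous). Neither constitutes a genuine obstacle; the whole argument is three short reductions — data processing of $D_{\max}$, partial trace on the smoothing ball, and duality.
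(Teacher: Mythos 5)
Your proof is correct. The paper itself gives no proof of this lemma --- it is quoted from \cite{Tomamichel:2010ff} --- but your three-step argument (data processing of $D_{\max}$ under $\tr_C$ for the unsmoothed case, partial trace of the optimal smoothing state using monotonicity of fidelity, and the duality relation of Definition~\ref{LEM_DUALITY} to transfer the result to $H_{\max}^\eps$) is precisely the standard route taken in that reference, and every step goes through with the paper's definitions.
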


The following identity is given in Theorem 1 of \cite{Tomamichel:2009dd}:
\begin{lemma}\label{LEM_IID1}
$\forall \rho^{AB}\in\cD(\cH_{A}\otimes\cH_B)$,
\begin{eqnarray}
\lim_{\eps \to 0}\lim_{n\to \infty}\frac{1}{n} H_{\rm min}^\eps(A|B)_{\rho^{\otimes n}}&=& H(A|B)_\rho \label{EQ_iid_min1} \\
\lim_{\eps \to 0}\lim_{n\to \infty}\frac{1}{n} H_{\rm max}^\eps(A|B)_{\rho^{\otimes n}}&=& H(A|B)_\rho \label{EQ_iid_min2} .
 \end{eqnarray}
\end{lemma}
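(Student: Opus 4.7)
The plan is to sandwich the smoothed min- and max-entropies between a one-parameter family of conditional quantum R\'enyi entropies $H_\alpha(A|B)_\rho$ (for $\alpha \neq 1$), then exploit (i) the additivity of $H_\alpha$ on tensor products and (ii) its continuity at $\alpha = 1$ with limit $H(A|B)_\rho$, to close the sandwich after dividing by $n$. Concretely, I would introduce
\begin{equation*}
H_\alpha(A|B)_\rho := \sup_{\sigma^B \in \cD(\cH_B)} \tfrac{1}{1-\alpha} \log \tr\bigl[(\rho^{AB})^\alpha (\bbI_A \otimes \sigma^B)^{1-\alpha}\bigr],
\end{equation*}
and establish one-shot sandwich bounds of the form
\begin{equation*}
H_\alpha(A|B)_\rho - \tfrac{c_1(\eps)}{\alpha-1} \,\leq\, H_{\min}^\eps(A|B)_\rho \,\leq\, H_\beta(A|B)_\rho + \tfrac{c_2(\eps)}{1-\beta}
\end{equation*}
for every $\alpha > 1 > \beta > 0$, with $c_1(\eps),c_2(\eps)$ depending only on $\eps$. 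The lower bound is obtained by constructing $\bar\rho \in \cB^\eps(\rho^{AB})$ via Markov-type spectral trimming of the operator $\rho^{AB}(\bbI_A\otimes\sigma^B)^{-1}$, then bounding its operator norm in terms of the R\'enyi trace $\tr[(\rho^{AB})^\alpha(\bbI_A\otimes\sigma^B)^{1-\alpha}]$. The upper bound follows by showing that $H_\beta(A|B)_{\bar\rho}$ cannot exceed $H_\beta(A|B)_\rho$ by more than a fidelity-controlled correction for any $\bar\rho \in \cB^\eps(\rho)$. Sandwich bounds for $H_{\max}^\eps$ follow by applying the duality relation (\ref{EQ_dual}) to a purification.

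Next, I would apply the sandwich to $\rho^{\otimes n}$. Since $D_\alpha$ is additive on tensor products, and (by symmetry of $\rho^{\otimes n}$) the optimal $\sigma^{B^n}$ may be taken of the form $(\sigma^B_\star)^{\otimes n}$, we obtain $H_\alpha(A^n|B^n)_{\rho^{\otimes n}} = n\,H_\alpha(A|B)_\rho$. Dividing by $n$, the smoothing corrections $c_i(\eps)/(n(\alpha-1))$ vanish as $n\to\infty$, yielding
\begin{equation*}
H_\alpha(A|B)_\rho \,\leq\, \liminf_{n\to\infty}\tfrac{1}{n}H_{\min}^\eps(A^n|B^n)_{\rho^{\otimes n}} \,\leq\, \limsup_{n\to\infty}\tfrac{1}{n}H_{\min}^\eps(A^n|B^n)_{\rho^{\otimes n}} \,\leq\, H_\beta(A|B)_\rho .
\end{equation*}
Taking $\eps \to 0$ and then $\alpha,\beta \to 1$, continuity of $H_\alpha(A|B)_\rho$ at $\alpha = 1$ pinches both ends to $H(A|B)_\rho$, proving \reff{EQ_iid_min1}. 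Since a purification of $\rho^{\otimes n}$ is itself i.i.d., \reff{EQ_iid_min2} then follows from \reff{EQ_iid_min1} via the duality relation in Definition~\ref{LEM_DUALITY}.

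The main obstacle will be establishing the one-shot sandwich bounds with prefactors $c_1(\eps),c_2(\eps)$ that are independent of $n$ and of the underlying dimensions. Any prefactor scaling with $n$ or with $|A|,|B|$ would survive division by $n$ and destroy the limit. The lower bound in particular requires a careful interplay between (a) spectral truncations of the operator $\rho^{AB}(\bbI_A\otimes\sigma^B)^{-1}$, which interact cleanly with $D_\alpha$, and (b) fidelity-close perturbations of $\rho^{AB}$, which populate $\cB^\eps(\rho^{AB})$; translating between these two notions typically needs a gentle-measurement-style lemma converting small overlap losses into small fidelity losses. A secondary technical point is that continuity of the \emph{conditional} R\'enyi entropy at $\alpha = 1$ (as opposed to the unconditional Schatten-$\alpha$ norm case, where continuity is elementary) must be established separately, most cleanly by differentiating in $\alpha$ and identifying the derivative at $\alpha=1$ with $H(A|B)_\rho$ via $-\tr[\rho^{AB}\log(\bbI_A\otimes\rho^B)] - H(AB)_\rho$.
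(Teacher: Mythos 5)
The paper does not actually prove this lemma: it imports it verbatim as Theorem~1 of \cite{Tomamichel:2009dd}, so there is no internal proof to compare against. Your plan for the \emph{lower}-bound half of the sandwich is, in essence, the proof given in that reference: bound $H_{\min}^\eps(A|B)_\rho$ from below by a Petz-type conditional R\'enyi entropy $H_\alpha$ ($\alpha>1$) minus a term of the form $\frac{1}{\alpha-1}\log\frac{2}{\eps^2}$, via spectral trimming of $\rho^{AB}$ relative to $\bbI_A\otimes\sigma^B$ together with a gentle-measurement argument, then use additivity on $\rho^{\otimes n}$ (for the lower bound you only need to plug in the product $\sigma_\star^{\otimes n}$, not prove optimality of product $\sigma$) and continuity of $\alpha\mapsto H_\alpha$ at $\alpha=1$. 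Your reduction of \reff{EQ_iid_min2} to \reff{EQ_iid_min1} via Definition~\ref{LEM_DUALITY} applied to an i.i.d.\ purification is also correct.

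The genuine gap is in your \emph{upper}-bound half. You propose $H_{\min}^\eps(A|B)_\rho\le H_\beta(A|B)_\rho+\frac{c_2(\eps)}{1-\beta}$ for $\beta<1$, justified by continuity of $H_\beta$ under fidelity-small perturbations. But the natural perturbation bound is additive at the level of the trace functional $Q_\beta(\rho|\sigma):=\tr[\rho^\beta\sigma^{1-\beta}]$, i.e.\ $Q_\beta(\bar\rho|\sigma)\le Q_\beta(\rho|\sigma)+O(\eps^\beta)$, and this does \emph{not} translate into an additive, $n$-independent correction to $H_\beta=\frac{1}{1-\beta}\log Q_\beta$ once $Q_\beta$ is itself exponentially small. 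That is exactly what happens for $\rho^{\otimes n}$ when $H_\beta(A|B)_\rho<0$: then $Q_\beta(\rho^{\otimes n})=2^{(1-\beta)nH_\beta}\to0$, the $O(\eps^\beta)$ term dominates, and your bound degenerates to $\limsup_n\frac{1}{n}H_{\min}^\eps(A^n|B^n)\le 0$, which fails to recover a negative $H(A|B)_\rho$. The standard repair — and the one consistent with how this paper argues its own converse — is to use the Alicki--Fannes-type bound of Lemma~\ref{H_boudn}: $H_{\min}^\eps(A|B)_\rho\le H(A|B)_\rho+8\eps\log|A|+2h(2\eps)$. Applied per copy this gives $\frac{1}{n}H_{\min}^\eps(A^n|B^n)_{\rho^{\otimes n}}\le H(A|B)_\rho+8\eps\log|A|+\frac{2h(2\eps)}{n}$; note that, contrary to your stated worry, a dimension-dependent term proportional to $\eps$ is harmless here because the order of limits in the lemma takes $\eps\to0$ \emph{after} $n\to\infty$. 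With that substitution the argument closes.
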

\noindent
The following lemma is given in Lemma~5 of \cite{Renes:2010vk}.
\begin{lemma}\label{H_boudn}
For any $1>\eps>0$, and $\rho^{AB}\in\cD(\cH_A\otimes\cH_B)$, we have
\begin{align}
H_{\min}^\eps(A|B)_\rho&\leq H(A|B)_\rho+8\eps\log|A|+2 h(2\eps) \\
H_{\max}^\eps(A|B)_\rho&\geq H(A|B)_\rho -8\eps\log|A|-2h(2\eps)
\end{align}
where $h(\eps)=-\eps\log\eps-(1-\eps)\log(1-\eps)$.
\end{lemma}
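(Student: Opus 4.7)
The plan is to prove the min-entropy inequality first and then derive the max-entropy inequality from it by invoking the duality relation (Definition~\ref{LEM_DUALITY}). The core engine will be the Alicki--Fannes continuity bound for conditional von Neumann entropy: if $\rho^{AB},\sigma^{AB}\in\cD(\cH_A\otimes\cH_B)$ satisfy $\tfrac12\|\rho-\sigma\|_1\le\delta\le 1$, then $|H(A|B)_\rho-H(A|B)_\sigma|\le 4\delta\log|A|+2h(\delta)$. Combined with the second inequality of \reff{fidelity}, which turns the fidelity-based smoothing ball $\cB^\eps(\rho)$ into a trace-norm neighbourhood of radius $2\eps$, this should deliver the stated constants.

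For the first bound, I would start by letting $\bar\rho^{AB}\in\cB^\eps(\rho^{AB})$ be a maximiser for $H_{\min}^\eps(A|B)_\rho$, so $H_{\min}^\eps(A|B)_\rho=H_{\min}(A|B)_{\bar\rho}$. The standard operator inequality between min-entropy and von Neumann entropy (which follows from the variational definition of $H_{\min}$ via $D_{\max}$ and the fact that $D_{\max}\ge D$ is the Umegaki relative entropy) gives $H_{\min}(A|B)_{\bar\rho}\le H(A|B)_{\bar\rho}$. Next, $F^2(\bar\rho,\rho)\ge 1-\eps^2$ together with \reff{fidelity} implies $\tfrac12\|\bar\rho-\rho\|_1\le\sqrt{1-F^2(\bar\rho,\rho)}\le\eps$, so $\|\bar\rho-\rho\|_1\le 2\eps$. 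Applying Alicki--Fannes with $\delta=2\eps$ then yields $H(A|B)_{\bar\rho}\le H(A|B)_\rho+8\eps\log|A|+2h(2\eps)$, and chaining these inequalities gives the claimed bound on $H_{\min}^\eps(A|B)_\rho$.

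For the second bound I would introduce an arbitrary purification $\rho^{ABC}$ of $\rho^{AB}$ and invoke Definition~\ref{LEM_DUALITY} to write $H_{\max}^\eps(A|B)_\rho=-H_{\min}^\eps(A|C)_\rho$. Applying the first bound (already proven) to the pair $(A,C)$ gives $H_{\min}^\eps(A|C)_\rho\le H(A|C)_\rho+8\eps\log|A|+2h(2\eps)$, and the classical identity $H(A|C)_\rho=-H(A|B)_\rho$ for the reduced states of a pure tripartite state converts this directly to $H_{\max}^\eps(A|B)_\rho\ge H(A|B)_\rho-8\eps\log|A|-2h(2\eps)$.

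The main obstacle is bookkeeping for the continuity step: one has to be careful that the dimensional factor in Alicki--Fannes depends only on $|A|$ (not on $|B|$ or $|C|$), and that the translation from the fidelity-ball $\cB^\eps$ to a trace-norm neighbourhood uses the correct side of \reff{fidelity} so as to produce exactly the factor $8=4\cdot 2$ rather than a looser constant. Beyond that the argument is entirely mechanical, and the symmetric form of the min/max statements falls out naturally from the purification-based duality.
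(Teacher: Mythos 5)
The paper does not actually prove this lemma; it simply cites it as Lemma~5 of \cite{Renes:2010vk}. Your argument is essentially the standard proof given in that reference: bound $H_{\min}$ of the optimal smoothing state by the conditional von Neumann entropy via $D_{\max}\geq D$, transfer back to $\rho$ with Alicki--Fannes, and obtain the max-entropy statement from the duality of Definition~\ref{LEM_DUALITY} together with $H(A|C)_\rho=-H(A|B)_\rho$ for a purification. The structure is sound. The one point you gloss over is where the factor of $2$ really comes from: the smoothing ball $\cB^{\eps}(\rho)$ consists of \emph{subnormalized} operators in $\cD_{\leq}(\cH)$, so before applying either the inequality $H_{\min}(A|B)_{\overline{\rho}}\leq H(A|B)_{\overline{\rho}}$ or Alicki--Fannes (both of which are statements about normalized states) you must replace $\overline{\rho}$ by its normalization $\overline{\rho}/\tr\overline{\rho}$; since $\tr\overline{\rho}\geq 1-\eps^2$, this renormalized state lies within trace distance $2\eps$ of $\rho$ (not $\eps$), and the change in $H_{\min}$ under rescaling is the benign term $\log(1/\tr\overline{\rho})$, which must be absorbed into the slack. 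That, rather than ``the correct side of \reff{fidelity},'' is what produces $8\eps\log|A|+2h(2\eps)$ instead of $4\eps\log|A|+2h(\eps)$. With that bookkeeping made explicit your proof is complete and coincides with the cited one.
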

The following lemma is from \cite{Berta:2010ud}.
\begin{lemma} \label{sup-add}
For $1>\eps>0$, $\rho^{AB}\in\cD(\cH_A\otimes\cH_B)$ and $\rho^{A'B'}\in\cD(\cH_{A'}\otimes\cH_{B'})$, we have
\begin{equation}
H_{\min}^{2\eps}(AA'|BB')_{\rho\otimes\rho'}\geq H^\eps_{\min}(A|B)_\rho + H_{\min}^\eps(A'|B')_{\rho'}.
\end{equation}
\end{lemma}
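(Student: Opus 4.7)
The plan is to reduce the smooth statement to the well-known additivity of the \emph{non-smoothed} conditional min-entropy, by pushing through the smoothing balls in tensor product form.

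First, I would pick optimizers. By definition of the smoothed conditional min-entropy,
\[
H_{\min}^\eps(A|B)_\rho = \max_{\bar{\rho}^{AB} \in \cB^\eps(\rho^{AB})} H_{\min}(A|B)_{\bar\rho},
\]
and similarly for $H_{\min}^\eps(A'|B')_{\rho'}$. Since the balls $\cB^\eps(\cdot)$ are compact in finite dimension, choose states $\bar\rho^{AB}\in\cB^\eps(\rho^{AB})$ and $\bar\rho'^{A'B'}\in\cB^\eps(\rho'^{A'B'})$ that attain these maxima.

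Next, I would argue the product state lies in a $2\eps$-ball of the product: $\bar\rho^{AB}\otimes\bar\rho'^{A'B'}\in\cB^{2\eps}(\rho^{AB}\otimes\rho'^{A'B'})$. The cleanest route is multiplicativity of fidelity under tensor products, which carries over to subnormalized states: $F(\bar\rho\otimes\bar\rho',\rho\otimes\rho')=F(\bar\rho,\rho)\,F(\bar\rho',\rho')$. Squaring and using the defining inequalities of the two balls gives
\[
F^2(\bar\rho\otimes\bar\rho',\rho\otimes\rho') \ge (1-\eps^2)^2 = 1-2\eps^2+\eps^4 \ge 1-(2\eps)^2,
\]
where the last step reduces to $\eps^4\le 2\eps^2$ and hence to $\eps^2\le 2$, which holds since $\eps<1$. (Alternatively, one may invoke the triangle inequality for the purified distance $C(\cdot,\cdot)$ introduced in Lemma~\ref{LEM_Dec} together with its monotonicity under tensoring with a fixed state, which gives the same $2\eps$ bound.) Consequently,
\[
H_{\min}^{2\eps}(AA'|BB')_{\rho\otimes\rho'} \;\ge\; H_{\min}(AA'|BB')_{\bar\rho\otimes\bar\rho'}.
\]

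Finally, I would invoke additivity of the unsmoothed min-entropy. From the variational expression $H_{\min}(A|B)_\omega = -\min_{\sigma^B}D_{\max}(\omega^{AB}\|\bbI_A\otimes\sigma^B)$ and the product form of the max-relative entropy,
\[
D_{\max}(\bar\rho\otimes\bar\rho'\,\|\,\bbI_{AA'}\otimes\sigma^B\otimes\sigma'^{B'}) = D_{\max}(\bar\rho\|\bbI_A\otimes\sigma^B) + D_{\max}(\bar\rho'\|\bbI_{A'}\otimes\sigma'^{B'}),
\]
one obtains $H_{\min}(AA'|BB')_{\bar\rho\otimes\bar\rho'} \ge H_{\min}(A|B)_{\bar\rho} + H_{\min}(A'|B')_{\bar\rho'}$ by restricting the $\sigma^{BB'}$ minimization to product states and choosing the individually optimal $\sigma^B,\sigma'^{B'}$. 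Combining this with the previous display and the choice of $\bar\rho,\bar\rho'$ as optimizers yields the lemma.

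The only nontrivial technical point is Step two: ensuring that the product-ball containment really holds with radius $2\eps$ under the paper's $\cB^\eps$ convention (which is phrased with possibly subnormalized $\bar\rho$ and uses $F^2(\bar\rho,\rho)\ge 1-\eps^2$). Since $F(\rho_1\otimes\rho_2,\sigma_1\otimes\sigma_2)=F(\rho_1,\sigma_1)F(\rho_2,\sigma_2)$ remains valid for subnormalized states, this step goes through without further modification, and the rest of the argument is a routine combination of standard identities.
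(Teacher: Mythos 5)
Your proof is correct, but there is nothing in the paper to compare it against: the paper does not prove this lemma at all, it simply imports it from the cited reference on the decoupling theorem (``The following lemma is from \cite{Berta:2010ud}''). What you supply is a self-contained, elementary argument entirely within the paper's own conventions, and the three steps all go through: the optimizers exist by compactness of $\cB^\eps(\cdot)$ in finite dimension; the product-ball containment follows from multiplicativity of the fidelity \reff{fidelity-aaa} (which, as you note, does hold for subnormalized states with this plain-fidelity definition, so $F^2(\bar\rho\otimes\bar\rho',\rho\otimes\rho')\ge(1-\eps^2)^2\ge 1-4\eps^2$); and the unsmoothed additivity follows from additivity of $D_{\max}$ on tensor products together with restriction of the $\sigma^{BB'}$-minimization to product states. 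Two small remarks. First, your stated reduction of $(1-\eps^2)^2\ge 1-(2\eps)^2$ to ``$\eps^4\le 2\eps^2$'' is the wrong algebra --- the difference is $2\eps^2+\eps^4\ge 0$, which is trivially true, so no condition on $\eps$ is actually needed; this is harmless but worth fixing. Second, your parenthetical alternative via the quantity $C(\cdot,\cdot)$ of Lemma~\ref{LEM_Dec} and ``monotonicity under tensoring with a fixed state'' is shakier than the main route: with the plain fidelity, tensoring with a \emph{subnormalized} fixed state can decrease $F$ (by the factor $\tr\bar\rho'$) and hence increase $C$, which is precisely why the purified distance is usually defined via the generalized fidelity. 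Since your main argument never uses that alternative, the proof stands; I would simply delete the parenthetical or restrict it to normalized smoothing states.
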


\section{Proof of the lower bound in \reff{ea_eaq_bounds}}\label{app_2}
We need the following one-shot decoupling lemma \cite{Dupuis10}.
\begin{theorem}[One-shot decoupling]\label{thm_decoupling}
Fix $\eps>0$, and a completely positive and trace-preserving map $\cN^{A''\to B}$ with Stinespring extension $U_{\cN}^{A''\to B\tE}$ and a complementary channel ${\cN}_c^{A''\to E}$. Let $\phi^{\tA\tB\tR}$ be a pure state shared among Alice, Bob, and a reference, and let $\Omega^{AB\tE}:=(\id^{A}\otimes U_{\cN}^{A''\to B\tE})\sigma^{AA''}$, where $\sigma\in\cD(\cH_{A}\otimes\cH_{A''})$ is any pure state. Then there exists an encoding partial isometry $V^{\tA\to A''}$ and a decoding map $\cD^{B\tB\to \tA\tB}$ such that
\begin{equation}
\|{\cN}_c^{A''\to \tE}(V(\phi^{\tA\tR})V^\dagger)-\Omega^{\tE}\otimes\phi^{\tR}\|_1\leq 2\sqrt{\delta_1}+\delta_2
\end{equation}
and
\begin{equation}
\|(\cD\circ\cN)(V(\phi^{\tA\tB\tR})V^\dagger)-\phi^{\tA\tB\tR}\|_1\leq 2\sqrt{2\sqrt{\delta_1}+\delta_2}
\end{equation}
where
\begin{eqnarray}
\delta_1&=&3\times 2^{\frac{1}{2}\left[H_{\max}^\eps(\tA)_\phi-H_{\min}^\eps(A)_\Omega\right]}+24\eps \\
\delta_2&=&3\times 2^{-\frac{1}{2}\left[H_{\min}^\eps(A|\tE)_\Omega+H^\eps_{\min}(\tA|\tR)_\psi\right]}+24\eps.
\end{eqnarray}

\end{theorem}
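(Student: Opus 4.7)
The theorem is the one-shot decoupling theorem of Dupuis et al., and the plan is to choose the encoding $V^{\tA\to A''}$ at random from the Haar measure (embedding $\tA$ isometrically into $A''$), show that on average the environment $\tE$ decouples from the reference $\tR$, and then apply Uhlmann's theorem to convert the $\tE$-decoupling into a recovery map on $B$, which will serve as the decoder $\cD^{B\tB\to\tA\tB}$.

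For the decoupling estimate I would first replace $\phi^{\tA\tR}$ and $\Omega^{AB\tE}$ by nearby sub-normalised states $\bar\phi,\bar\Omega$ achieving the relevant smoothed entropies; each smoothing step costs at most $\eps$ in trace distance and accounts for the additive $24\eps$ pieces in $\delta_1$ and $\delta_2$. Setting $M_V := \cN_c(V\bar\phi^{\tA\tR}V^\dagger)-\bar\Omega^{\tE}\otimes\bar\phi^{\tR}$, I would use the weighted-norm inequality $\|M\|_1 \le \sqrt{\tr\bigl(\tau^{-1/4}M\tau^{-1/4}\bigr)^{2}}$ for a suitable product state $\tau=\tau^{\tE}\otimes\tau^{\tR}$, pull $\bbE_V$ inside the square root by Jensen, and evaluate the Haar second moment using the standard identity $\bbE_V[V^{\otimes 2}X(V^\dagger)^{\otimes 2}]=\alpha\bbI+\beta\bbF$ on the symmetric/antisymmetric subspaces. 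Choosing $\tau^{\tE}$ to optimise the $A|\tE$-marginal and $\tau^{\tR}$ to optimise the $\tA|\tR$-marginal turns the resulting collision-entropy exponents into the conditional min-entropies contributing to $\delta_2$, while the parallel unconditional splitting against $\tau^{\tE}\otimes\bar\phi^{\tR}$ produces the $\delta_1$ exponent $H_{\max}^\eps(\tA)_\phi - H_{\min}^\eps(A)_\Omega$. Combining the two gives
\[
\bbE_V\|M_V\|_1 \le 2\sqrt{\delta_1}+\delta_2,
\]
so some deterministic $V$ realises the first inequality of the theorem.

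For the recovery, the smallness of $\|M_V\|_1$ on $\tE\tR$ means that $(\id^{\tR}\otimes U_{\cN}V)|\bar\phi\rangle^{\tA\tB\tR}$ and $|\bar\Omega\rangle^{AB\tE}\otimes|\bar\phi\rangle^{\tA\tB\tR}$ purify marginals that are close on $\tE\tR$. Uhlmann's theorem then supplies an isometry on the complementary system $B\tB$ that maps one purification close to the other, and the Fuchs--van de Graaf inequality \reff{fidelity} upgrades the $L^1$-closeness $2\sqrt{\delta_1}+\delta_2$ of the marginals into the stated $2\sqrt{2\sqrt{\delta_1}+\delta_2}$ trace-distance bound on the fully decoded state. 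Composing this isometry with the identifications induced by the purifications defines $\cD^{B\tB\to\tA\tB}$ and yields the second inequality.

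The main obstacle is the Haar-moment computation for a partial (rather than full) isometry together with the tight conversion from the collision entropies that naturally appear in the $L^2$-bound to the smoothed min-/max-entropies in the statement. Extracting exactly the two exponents $H_{\max}^\eps(\tA)_\phi - H_{\min}^\eps(A)_\Omega$ and $H_{\min}^\eps(A|\tE)_\Omega + H_{\min}^\eps(\tA|\tR)_\phi$, rather than weaker unsmoothed or collision-entropy quantities, is the technically delicate step that distinguishes the one-shot version from its fully i.i.d.\ predecessors; careful bookkeeping of the sub-normalisation of $\bar\phi,\bar\Omega$ and of the dimensional factors absorbed by $\tau^{-1/4}$ is what makes the entropies match up cleanly with the statement.
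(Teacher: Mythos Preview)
The paper does not prove this theorem: it is quoted verbatim as a lemma from Dupuis's thesis \cite{Dupuis10} and then applied (in the \texttt{IEEEproof} that follows the statement) to derive the lower bound in \reff{ea_eaq_bounds}. So there is no ``paper's own proof'' to compare against.

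That said, your sketch is the correct outline of how the result is established in the cited source: a Haar-random partial isometry $V^{\tA\to A''}$, smoothing of $\phi$ and $\Omega$ to states achieving the relevant $\eps$-smoothed entropies, the weighted Hilbert--Schmidt bound $\|M\|_1\le\sqrt{\tr(\tau^{-1/4}M\tau^{-1/4})^2}$ with Jensen to pull the expectation inside, the second-moment swap-trick computation, and finally Uhlmann's theorem plus Fuchs--van de Graaf to convert the $\tE\tR$-decoupling into a decoder on $B\tB$ with the square-root loss in the bound. Your identification of the delicate step---translating the collision-entropy exponents that fall out of the $L^2$ estimate into the smoothed min/max quantities $H_{\max}^\eps(\tA)_\phi - H_{\min}^\eps(A)_\Omega$ and $H_{\min}^\eps(A|\tE)_\Omega + H_{\min}^\eps(\tA|\tR)_\phi$, with careful tracking of the sub-normalisation and the $24\eps$ smoothing penalties---is accurate.
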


\begin{IEEEproof}

Identify the state $\phi^{\tA\tB\tR}$ in Theorem~\ref{thm_decoupling} with the state $\Phi^{RA_0}\otimes\Phi^{A_1B_1}$, where $\tA\equiv A_0A_1$, $\tB\equiv B_1$ and $\tR\equiv R$. Identify the state $\Omega^{AB\tE}$ in Theorem~\ref{thm_decoupling} with
\begin{equation}
\Omega^{ABE}:=(\id^A\otimes U_{{\cN}}^{A'\to B EE_1})(\varphi^{AA'}).
\end{equation}
where $U_{{\cN}}^{A'\to B E}$ is a Stinespring extension of the map $\cN^{A'\to B}$  and $\varphi^{AA''}$ is some pure state.

Theorem~\ref{thm_decoupling} states that a partial isometry $V^{A_0A_1\to A''}$ and a decoding map $\cD^{B_1B\to \widehat{A}_0\widehat{A}_1\widehat{B}_1 }$ exist such that
\begin{multline}\label{f_cond01}
\|{{\cN}}_c^{A'\to E}(V(\Phi^{A_0R}\otimes\tau^{A_1})V^\dagger)-\Omega^{E}\otimes\tau^R\|_1\\ \leq 2\sqrt{\delta_1}+\delta_2
\end{multline}
and
\begin{multline}\label{f_cond02}
\|\widehat{\Omega}^{\widehat{A}_0\widehat{A}_1\widehat{B}_1R}-\Phi^{A_0R}\otimes\Phi^{A_1B_1}\|_1 \leq 2\sqrt{2\sqrt{\delta_1}+\delta_2}
\end{multline}
where ${{\cN}}_c^{A'\to E}$ is the complementary channel induced by ${\cN}$, the state $\widehat{\Omega}^{\widehat{A}_0\widehat{A}_1\widehat{B}_1R}$ is given by
\[
\widehat{\Omega}^{\widehat{A}_0\widehat{A}_1\widehat{B}_1R}=(\id^R\otimes \cD\circ\widetilde{\cN}\circ V)(\Phi^{A_0R}\otimes\Phi^{A_1B_1}),
\]
and
\begin{align}
\delta_1&=3\times 2^{\frac{1}{2}\left[H_{\max}^\eps(A_0A_1)_{\tau\otimes\tau}-H_{\min}^\eps(A)_\Omega\right]}+24\eps \\
\delta_2&=3\times 2^{-\frac{1}{2}\left[H_{\min}^\eps(A|E)_\Omega+H^\eps_{\min}(A_0A_1|R)_{\Phi\otimes\tau}\right]}+24\eps.
\end{align}
If
\begin{eqnarray}
H_{\max}^\eps(A_0A_1)_{\tau\otimes\tau}-H_{\min}^\eps(A)_\Omega&\leq 2\log\eps \label{ach_03}\label{67}\\
H_{\min}^\eps(A|E)_\Omega+H^\eps_{\min}(A_0A_1|R)_{\Phi\otimes\tau} &\geq -2\log\eps\label{68}
\end{eqnarray}
then
\bea
\delta_1 &\leq & 27\eps \label{d1}\\
\delta_2 &\leq & 27\eps \label{d2}.
\eea
From \reff{88} and the additivity of the max-entropy for tensor-product
states, we have
\begin{align}
\text{L.H.S. of (\ref{ach_03})} &\leq H_{\max}(A_0)_\tau +H_{\max}(A_1)_\tau -  H_{\min}^{\eps}(A)_\Omega \nonumber \\
&= \log|A_0|+\log|A_1|-H_{\min}^\eps(A)_{\Omega}. \label{lhs1}
\end{align}
Note that \reff{67} (and hence also \reff{d1}) is satisfied for the choice
\be
\text{R.H.S. of (\ref{lhs1})} \leq  2 \log \eps.
\ee
This in turn yields
\begin{equation}
\log|A_0|+\log|A_1| \leq H_{\min}^\eps(A)_{\Omega} +2\log\eps.\label{ach_01}
\end{equation}
Similarly, using Lemma~\ref{sup-add} of Appendix A, we have
\begin{align}
&\text{L.H.S. of (\ref{68})} \nonumber \\
&\geq  H_{\min}^{\eps}(A|E)_\Omega+H^{\eps/2}_{\min}(A_0|R)_\Phi +H^{\eps/2}_{\min}(A_1)_\tau\nonumber\\
&\geq - H_{\max}^{\eps}(A|B)_\Omega-\log|A_0|+\log|A_1|\label{lhs2}.
\end{align}
Then setting
$$\text{R.H.S. of (\ref{lhs2})}\geq -2\log\eps,$$
for which \reff{68} (and hence also \reff{d2})\ is satisfied, we obtain
\begin{equation}
\log|A_0| -\log|A_1|\leq - H_{\max}^{\eps}(A|B)_\Omega+2\log\eps. \label{ach_02}
\end{equation}

We further have
\[
F(\widehat{\Omega}^{R\widehat{A}_0},\Phi^{RA_0})\geq 1-\frac{1}{2}\|\widehat{\Omega}^{R\widehat{A}_0}-\Phi^{RA_0}\|_1\geq 1- \eps'.
\]
where $\eps'=\sqrt{2\sqrt{27\eps}+27\eps}$. This in turn implies that
there exists a pair $(V,\cD)$ such that
\[
F_e(\tau^{A_0},\widetilde{\cD}\circ\cN\circ\widetilde{\cE})\equiv
F^2(\widehat{\Omega}^{R\widehat{A}_0},\Phi^{RA_0}) \geq 1-2\eps'
\]
where $\widetilde{\cE}(\Phi^{A_0R})= V (\Phi^{A_0R}\otimes\Phi^{A_1B_1})$ and $\widetilde{\cD}:=\tr_{\widehat{A_1}\widehat{B_1}}\circ\cD$.
Finally, from (\ref{ach_01}) and (\ref{ach_02}) we infer that the quantum communication gain $\log|A_0|$ and the entanglement cost $\log|A_1|$, as given below, respectively, are achievable:
\begin{align}
\log|A_0|&=\frac{1}{2}\left[H_{\min}^\eps(A)_\Omega-H_{\max}^\eps(A|B)_\Omega \right]+2\log\eps \label{eq_eaet_00}\\
\log|A_1|&=\frac{1}{2}\left[H_{\min}^\eps(A)_\Omega+H_{\max}^\eps(A|B)_\Omega\right].\label{eq_eaet_01}
\end{align}

\end{IEEEproof}

\section{Proof of Theorem~\ref{THM_EAQC}}\label{app_3}
We employ Lemmas \ref{min_ent} and \ref{ave_min}, given below, in our proof.
Lemma \ref{min_ent} is obtained directly from Proposition 4.5 of \cite{KW2004}.
Lemma \ref{ave_min} yields a relation between the average fidelity and entanglement fidelity and was proved in \cite{KW2004} (Proposition 4.4):

\begin{lemma} \label{min_ent}
For any fixed $0<\eps<1$, if $\cT:\cB(\cH)\mapsto\cB(\cH)$ is a CPTP map such that
\[
F_e(\tau,\cT)\geq 1-\eps,
\]
where $\tau \in \cB(\cH)$ denotes the completely mixed state, 
then there exists a subspace $\cH'\subset \cH$ whose $\dim \cH'= (\dim\cH)/2$, and a CPTP $\cT':\cB(\cH')\mapsto\cB(\cH')$ such that
\[
F_{\min}(\cT') \geq 1- 2 \left(1-F_e(\tau,\cT)\right) \geq 1- 2\eps,
\]
with $\cT'$ being a restriction of the map $\cT$ to the
subspace $\cH'$ (see Proposition 4.5 of \cite{KW2004} for details).
\end{lemma}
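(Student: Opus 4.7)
The plan is to combine the Kraus-operator formula for $F_e(\tau,\cT)$ with a spectral (Markov-type) argument applied to a single ``maximal'' branch of the Stinespring dilation of $\cT$.

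I would start from the standard identity $F_e(\tau,\cT) = \frac{1}{d^2}\sum_\alpha \lvert\tr E_\alpha\rvert^2$, where $d := \dim\cH$ and $\{E_\alpha\}$ is any Kraus decomposition of $\cT$. Rotating the Kraus labels by a unitary (which leaves $\cT$ unchanged), I may assume the entire trace contribution is concentrated in a single Kraus operator $W := E_0$, so that $\lvert\tr W\rvert^2 = d^2 F_e(\tau,\cT) \geq d^2(1-\eps)$. Combining Cauchy--Schwarz $\lvert\tr W\rvert^2 \leq d\,\tr(W^\dagger W)$ with the bound $W^\dagger W \leq \sum_\alpha E_\alpha^\dagger E_\alpha = \bbI$, I obtain the key trace estimate
\[
\tr(\bbI - W^\dagger W) \leq d\,\eps.
\]

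Polar-decomposing $W = U\sqrt{W^\dagger W}$ with $U$ a unitary on $\cH$, I then apply Markov's inequality to the eigenvalues of the positive operator $\bbI - W^\dagger W$: at most $d/2$ of them can exceed $2\eps$, so there is a subspace $\cH' \subseteq \cH$ of dimension $d/2$ on which $W^\dagger W \geq (1-2\eps)\bbI$, whence $\sqrt{W^\dagger W}$ is bounded below by $\sqrt{1-2\eps}$ on $\cH'$. For every unit $|\phi\rangle \in \cH'$, keeping only the $\alpha=0$ contribution to $\sum_\alpha \lvert\langle\phi|U^\dagger E_\alpha|\phi\rangle\rvert^2$ gives
\[
\langle\phi|U^\dagger\cT(|\phi\rangle\langle\phi|)U|\phi\rangle \geq \lvert\langle\phi|U^\dagger W|\phi\rangle\rvert^2 = \langle\phi|\sqrt{W^\dagger W}|\phi\rangle^2 \geq 1 - 2\eps.
\]

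I would then take $\cT':\cB(\cH')\to\cB(\cH')$ to be the CPTP ``restriction'' of the unitary-corrected channel $\rho\mapsto U^\dagger\cT(\rho)U$ to $\cH'$, obtained by projecting the output onto $\cH'$ and absorbing any leakage into a fixed state on $\cH'$ (which preserves positivity and trace). The inequality above then immediately yields $F_{\min}(\cT') \geq 1 - 2\eps = 1 - 2(1 - F_e(\tau,\cT))$. The main obstacle is verifying that this polar-correction construction produces a bona fide CPTP map on $\cB(\cH')$ while preserving the pointwise fidelity bound; the unitary $U$ is essential here, because it converts the spectral bound on $W^\dagger W$ (which a priori controls only $\langle\phi|W^\dagger W|\phi\rangle$) into the bound on the diagonal matrix element $\langle\phi|U^\dagger W|\phi\rangle$ that actually enters the minimum fidelity --- without this correction, a naive Markov argument on the Hermitian part of $W$ only yields the weaker estimate $F_{\min} \gtrsim 1 - O(\sqrt{\eps})$.
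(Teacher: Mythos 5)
Your argument is correct, and it is necessarily a different route from the paper's, because the paper gives no proof at all here: it imports the statement wholesale as Proposition 4.5 of \cite{KW2004}. Your self-contained derivation checks out at every step: the identity $F_e(\tau,\cT)=d^{-2}\sum_\alpha|\tr E_\alpha|^2$, the unitary gauge freedom on Kraus representations that concentrates the trace vector into a single element $W$, the Cauchy--Schwarz bound $|\tr W|^2\le d\,\tr(W^\dagger W)$ combined with $W^\dagger W\le\bbI$ to get $\tr(\bbI-W^\dagger W)\le d\,(1-F_e)$, the Markov/eigenvalue-counting step producing a half-dimensional eigenspace on which $W^\dagger W\ge(1-2\eps)\bbI$, and the polar isometry turning this into $|\langle\phi|U^\dagger W|\phi\rangle|^2=\langle\phi|\sqrt{W^\dagger W}|\phi\rangle^2\ge1-2\eps$, which lower-bounds the full channel fidelity after dropping the other Kraus branches. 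Padding the projected output with a fixed state on $\cH'$ indeed yields a CPTP map on $\cB(\cH')$ without decreasing the diagonal matrix element, so $F_{\min}(\cT')\ge1-2\bigl(1-F_e(\tau,\cT)\bigr)$ follows with the stated constant. Two minor remarks. First, your $\cT'$ is a restriction of $U^\dagger\cT(\cdot)\,U$ rather than of $\cT$ itself; this departs from the lemma's literal wording but is harmless for its only use in the paper (Theorem~\ref{THM_EAQC}), since the correction $U^\dagger$ can be absorbed into the decoding map over which $\mathbf{F}_{\min}(\cN,\cH)$ is maximized. Second, your closing claim that the Hermitian-part shortcut only gives $1-O(\sqrt{\eps})$ is too pessimistic: after fixing the global phase of $W$ so that $\tr W\ge0$, the same Markov argument applied to $\bbI-\tfrac{1}{2}(W+W^\dagger)$ yields $F_{\min}\ge(1-2\eps)^2\ge1-4\eps$; this is still weaker than your $1-2\eps$, so the polar correction is indeed what buys the constant in the lemma, but the gap is a factor of two, not a change of order.
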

\begin{lemma}\label{ave_min}
If $m = \dim \cH$ and $\cT:\cB(\cH)\mapsto\cB(\cH)$ is a CPTP map, then 
\[
F_e(\tau,\cT)=\frac{(m +1){F}_{\rm{av}}(\cT)-1}{m}
\]
where $\tau \in \cB(\cH)$ denotes the completely mixed state.

\end{lemma}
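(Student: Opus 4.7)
The plan is to derive both $F_e(\tau,\cT)$ and $F_{\rm av}(\cT)$ in a common form using a Kraus decomposition of $\cT$, and then compare. Fix Kraus operators $\{A_k\}$ for $\cT$, so that $\cT(\rho)=\sum_k A_k \rho A_k^\dagger$ with $\sum_k A_k^\dagger A_k = \bbI$ (trace preservation). Writing the maximally entangled state $|\Phi\>^{RQ}=\frac{1}{\sqrt{m}}\sum_i |i\>|i\>$ and using the standard identity $(\bbI\otimes A_k)|\Phi\> = \frac{1}{\sqrt{m}}\sum_i |i\>\otimes A_k|i\>$, a short computation gives
\begin{equation}\nonumber
F_e(\tau,\cT) \;=\; \<\Phi|(\id\otimes\cT)(\proj{\Phi})|\Phi\> \;=\; \frac{1}{m^2}\sum_k |\tr A_k|^2.
\end{equation}

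Next I would rewrite $F_{\rm av}$ in the same language. By definition
\begin{equation}\nonumber
F_{\rm av}(\cT)=\int d\phi\, \<\phi|\cT(\proj{\phi})|\phi\> = \sum_k \int d\phi\, |\<\phi|A_k|\phi\>|^2,
\end{equation}
where the integral is with respect to the Haar measure on pure states of $\cH$. The key tool is the Schur/Haar identity for the symmetric twirl,
\begin{equation}\nonumber
\int d\phi\, \proj{\phi}^{\otimes 2} \;=\; \frac{\bbI+F}{m(m+1)},
\end{equation}
with $F$ the swap operator. Applying it to $A_k\otimes A_k^\dagger$ yields
\begin{equation}\nonumber
\int d\phi\, |\<\phi|A_k|\phi\>|^2 \;=\; \frac{|\tr A_k|^2 + \tr(A_k A_k^\dagger)}{m(m+1)}.
\end{equation}

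Summing over $k$ and using $\sum_k \tr(A_k A_k^\dagger)=\tr\!\left(\sum_k A_k^\dagger A_k\right)=\tr \bbI = m$, I obtain
\begin{equation}\nonumber
F_{\rm av}(\cT) \;=\; \frac{\sum_k|\tr A_k|^2 + m}{m(m+1)} \;=\; \frac{m^2 F_e(\tau,\cT)+m}{m(m+1)} \;=\; \frac{m F_e(\tau,\cT)+1}{m+1}.
\end{equation}
Rearranging gives the stated identity $F_e(\tau,\cT)=\bigl[(m+1)F_{\rm av}(\cT)-1\bigr]/m$.

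The only non-routine step is the twirl identity; once that is in hand, the rest is bookkeeping with Kraus operators and trace preservation. A minor subtlety to flag in writing the proof is the independence of the resulting expressions from the choice of Kraus decomposition, but this is automatic since $F_e$ and $F_{\rm av}$ are defined directly in terms of $\cT$; any convenient Kraus representation will do.
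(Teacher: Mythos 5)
Your proposal is correct, and every step checks out: the identity $F_e(\tau,\cT)=\frac{1}{m^2}\sum_k|\tr A_k|^2$ follows from $\<\Phi|(\bbI\otimes A_k)|\Phi\>=\frac{1}{m}\tr A_k$; the twirl identity $\int d\phi\,\proj{\phi}^{\otimes 2}=\frac{\bbI+F}{m(m+1)}$ is the correct normalization of the projector onto the symmetric subspace; and $\tr\bigl[(A_k\otimes A_k^\dagger)F\bigr]=\tr(A_kA_k^\dagger)$ together with trace preservation gives $\sum_k\tr(A_kA_k^\dagger)=m$, from which the stated relation follows by rearrangement. Your closing remark about independence from the choice of Kraus decomposition is handled correctly: both sides are functions of $\cT$ alone, so no further argument is needed. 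Note, however, that the paper does not actually prove this lemma; it simply cites Proposition 4.4 of Kretschmann and Werner, so there is no internal proof to compare against. What you have written is the standard self-contained derivation (due essentially to Horodecki et al.\ and Nielsen), and it is the natural argument one would supply if the citation were to be replaced by a proof. The only stylistic point worth flagging is that the intermediate expression $\int d\phi\,|\<\phi|A_k|\phi\>|^2$ should be explicitly rewritten as $\tr\bigl[(A_k\otimes A_k^\dagger)\int d\phi\,\proj{\phi}^{\otimes 2}\bigr]$ before invoking the twirl identity, so that the reader sees exactly where the two terms $|\tr A_k|^2$ and $\tr(A_kA_k^\dagger)$ come from; as written, ``applying it to $A_k\otimes A_k^\dagger$'' leaves that one-line computation implicit.
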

We now prove Theorem~\ref{THM_EAQC}.
\begin{proof}
The first inequality in \reff{eq_EAQC} is proved as follows. Let $(\cE,\cD)$ be the optimal encoding and decoding pair that achieves the entanglement-assisted entanglement transmission capacity $E_{\rm{ea},\eps}^{(1)}(\cN)$. We then have
\[
F_e(\tau^{A_0},\cT)\geq 1-\eps,
\]
where $\cT:=\widetilde{\cD}\circ\cN\circ\widetilde{\cE}$ (see \reff{eq_dec00} for definitions of $\widetilde{\cD}$ and $\widetilde{\cE}$).
Lemma~\ref{min_ent} implies that there exists a subspace $\cH'\subset\cH_{A_0}$ with dimension $\dim\cH'=|A_0|/2$ and a CPTP map $\cT':\cB(\cH')\mapsto\cB(\cH')$ (which is a restriction of the map $\cT$ to the
subspace $\cH'$) such that $F_{\min}(\cT')\geq 1-2\eps$. This implies that
\[
Q_{\rm{ea},2\eps}^{(1)}(\cN) \geq \log |\cH'| = E_{\rm{ea},\eps}^{(1)}(\cN) -1.
\]

To prove the second inequality in \reff{eq_EAQC}, we resort to the average fidelity ${F}_{\rm{av}}(\cT)$ defined in \reff{ave_fidelity}.
The following relation
\[
F_{\min}(\cT) \leq {F}_{\rm{av}}(\cT)
\]
is trivial from their definitions. Suppose now that $(\cE,\cD)$ is the optimal encoding and decoding pair that achieves the entanglement-assisted quantum capacity $Q_{\rm{ea},\eps'}^{(1)}(\cN)$. We then have
\[
F_{\min}(\cT)\geq 1-\eps',
\]
where $\cT:\cD\circ\cN\circ\widetilde{\cE}$. Finally, Lemma~\ref{ave_min} ensures that the same encoding and decoding pair $(\cE,\cD)$ will give
\begin{eqnarray}
F_e(\tau^{A_0},\cT)&= & \frac{(m +1){F}_{\rm{av}}(\cT)-1}{m} \nonumber \\
&\geq & \frac{(m+1)(1-\eps')-1}{m} \nonumber \\
& = & 1- \frac{m+1}{m} \eps' \nonumber \\
&\geq & 1-2\eps'.
\end{eqnarray}
Therefore, $E_{\rm{ea},2\eps'}^{(1)}(\cN)\geq Q_{\rm{ea},\eps'}^{(1)}(\cN)$.
\end{proof}

\end{document}